%% LyX 1.6.9 created this file.  For more info, see http://www.lyx.org/.
%% Do not edit unless you really know what you are doing.
\documentclass[british]{amsart}
\usepackage[T1]{fontenc}
\usepackage[utf8]{inputenc}
\usepackage{geometry}
\geometry{verbose,bmargin=2.5cm,lmargin=2.8cm,rmargin=2.8cm}
\usepackage{amsthm}
\usepackage{amstext}
\usepackage{graphicx}
\usepackage{amssymb}
\usepackage{esint}

\makeatletter
%%%%%%%%%%%%%%%%%%%%%%%%%%%%%% Textclass specific LaTeX commands.
\numberwithin{equation}{section}
\numberwithin{figure}{section}

%%%%%%%%%%%%%%%%%%%%%%%%%%%%%% User specified LaTeX commands.
\newtheorem{lemma}{Lemma}

\title{Nodal domains of a non-separable problem - the right angled isosceles triangle}

\author{Amit Aronovitch$^{1}$}
\address{$^{\text{1}}$Department of Physics of Complex Systems,
The Weizmann Institute of Science, 76100 Rehovot, Israel,
Email: aronovitch@gmail.com}
\author{Ram Band$^{1,\,2}$}
\address{$^{\text{2}}$ School of Mathematics, University of Bristol, Bristol BS8 1TW, UK, 
Email: rami.band@bristol.ac.uk}
\author{David Fajman$^{3}$}
\address{$^{\text{3}}$Max Planck Institute for Gravitational Physics
  (Albert Einstein Institute), Am M\"uhlenberg 1,D-14476 Golm,
  Germany, 
Email: David.Fajman@aei.mpg.de}
\author{Sven Gnutzmann$^{4}$}
\address{$^{\text{4}}$ School of Mathematical Sciences, University of
  Nottingham, University Park, Nottingham NG7 2RD, United Kingdom, 
Email: Sven.Gnutzmann@nottingham.ac.uk}

\newcommand{\sign}{\mathop{\mathrm{Sign}}\nolimits}

\usepackage{babel}

\makeatother

\usepackage{babel}

\begin{document}
%%%%%%%%%%%%%%%%%%%%%%%%%%%%%%%%%%%%%%%%%%%%%%%%%

\begin{abstract}
We study the nodal set of eigenfunctions of the Laplace operator on
the right angled isosceles triangle. A local analysis of the nodal
pattern provides an algorithm for computing the number $\nu_{n}$
of nodal domains for any eigenfunction. In addition, an exact recursive
formula for the number of nodal domains is found to reproduce all
existing data. Eventually we use the recursion formula to analyse
a large sequence of nodal counts statistically. Our analysis shows
that the distribution of nodal counts for this triangular shape has
a much richer structure than the known cases of regular separable
shapes or completely irregular shapes. Furthermore we demonstrate
that the nodal count sequence contains information about the periodic
orbits of the corresponding classical ray dynamics.

%The nodal count sequence is generated by use of this
%formula and analyzed with respect to the asymptotic bound on the normalized
%number of nodal domains $\xi_{n}=\frac{\nu_{n}}{n}$ and the cumulative
%nodal count. 

\end{abstract}
\maketitle

\section{Introduction}

More than 200 years ago Ernst Chladni pioneered the study of standing
waves with his experiments on sound figures for the vibration modes
of plates \cite{chladni}. The sound figures revealed that one may
characterize the modes by looking at the nodal set -- the lines on
the plate which do not take part in the vibration and which are visualized
in a sound figure. For each mode he drew the nodal pattern and counted
the number of nodal lines and nodal domains. With his work he did
not only lay the foundations of modern acoustics but also started
a thread in theoretical and mathematical physics which lead to such
classic results as Sturm's oscillation theorem \cite{sturm} and which
continues to this day.\\
 The mathematical framework starts with the Laplacian $\Delta$
on a compact Riemannian manifold $\mathcal{M}$ -- for the purpose
of this paper it will be sufficient to consider dimension two. If
the manifold has a boundary then Dirichlet boundary conditions will
be assumed. One studies the eigenvalue problem \begin{equation}
-\Delta\varphi=\lambda\varphi\qquad\varphi|_{\partial\mathcal{M}}=0\ .\label{general_laplacian}\end{equation}
 The solutions define the discrete spectrum of (non-negative) eigenvalues
$\{\lambda_{N}\}_{N=1}^{\infty}$ which we assume to be ordered $0\le\lambda_{1}\le\lambda_{2}\le\dots$.
The corresponding eigenfunctions will be denoted $\varphi_{N}$. A
nodal domain of the eigenfunction $\varphi_{N}$ is a connected region
in $\mathcal{M}$ where the sign of $\varphi_{N}$ does not change.
We define the nodal count $\nu_{N}$ as the number of nodal domains
in $\varphi_{N}$. The nodal counts $\{\nu_{N}\}$ form a sequence
of integer numbers which characterizes the vibration modes $\varphi_{N}$
on the shape $\mathcal{M}$. In case of degeneracies in the spectrum
the nodal count is not uniquely defined. This may be overcome in various
ways, e.g. by fixing a basis (and an order in each degeneracy class).
Some results on nodal counts are valid for any choice for the basis
of eigenfunctions -- a famous example is the classic theorem by Courant
\cite{courant} which states $\nu_{N}\le N$.

More recently it has been proposed \cite{BLGNSM02} that one may use
the nodal count sequence to distinguish between \textit{i.} regular
shapes where the Laplacian is separable and the corresponding ray
(billiard) dynamics is integrable, and \textit{ii.} irregular shapes
where the ray dynamics is completely chaotic (see also \cite{nodal-keat,nodal-exp1,nodal-exp2,aiba,SMSA05}).\\
 In the regular separable case the nodal set has a checker board
pattern with crossing nodal lines. The nodal count can easily be found
using Sturm's oscillation theorem in both variables. In this case
many properties of the nodal count sequence can be developed analytically
-- e.g. the statistical distribution of the scaled nodal count $\xi_{N}=\nu_{N}/N$
can be described by an explicit limiting function $P(\xi)$. This
function has some generic universal features: $P(\xi)$ is an increasing
function with support $0\le\xi\le\xi_{\mathrm{crit}}<1$ where $\xi_{\mathrm{crit}}$
is a system dependent cut-off. Near the cut-off, for $\xi<\xi_{\mathrm{crit}}$
the distribution behaves as $P(\xi)\propto(\xi_{\mathrm{crit}}-\xi)^{-1/2}$.
\\
 In the irregular case no explicit counting functional is known.
In this case the nodal lines generally do not have any intersections
and counting nodal domains relies on numerical algorithms (such as
the Hoshen-Kopelman algorithm \cite{hoshen}) that represent the eigenfunctions
on a grid of finite resolution. The numerical procedure is reliable
if the resolution is high enough to resolve the distance between nodal
lines near avoided intersections \cite{monastra}. For high lying
eigenvalues $\lambda_{N}$ the algorithm is time-consuming due to
the increasing grid-size. The numerical experiments have shown that
a limiting distribution $P(\xi)$ takes the form $P(\xi)=\delta(\xi-\overline{\xi})$
where $\overline{\xi}$ is a universal constant (i.e. it does not
depend on the shape). This and other numerical findings have been
shown to be consistent with a seminal conjecture by Berry \cite{berry-rw}
which states that the statistics of eigenfunctions for an irregular
(chaotic) shape can be modelled by the Gaussian random wave (a superposition
of planar waves with the same wavelengths, random direction and random
phase). Bogomolny and Schmit \cite{BOSC02} realized that the nodal
structure of a two-dimensional random wave may be modelled by a parameter-free
critical percolation model (see also \cite{foltin,bogomolny2}). With
this heuristic model they were able to derive an explicit value for
$\overline{\xi}$ (and other features of the nodal set) with excellent
agreement to all numerical data. One interesting implication of the
critical percolation model is that nodal lines can be described by
SLE which has been checked affirmative in numerical experiments \cite{sle-keat,sle-bog,sle-keat2}.
Meanwhile some features of the nodal count have been proven rigorously
for random waves on a sphere -- these rigorous results imply the $\delta$-type
distribution for $P(\xi)$ (but cannot predict the value $\overline{\xi}$).

Another interesting applications of the nodal count that we will touch
in this paper are inverse questions. Two inverse questions have been
discussed to some detail: \textit{i.} Can one resolve isospectrality
by looking at the additional information contained in the nodal count
\cite{GNSMSO05,BRKLPU07,BRKL08}? \textit{ii.} Can one count the shape
of a drum \cite{GNKASM06,nodaltrace-wittenberg,KASM08,KL09}? In other
words, does the sequence of nodal counts (ordered by increasing eigenvalues)
determine the shape of the manifold $\mathcal{M}$? we refer to the
shape rather than the manifold itself as the nodal counts are invariant
under scaling of $\mathcal{M}$.\\
 Both inverse questions have been answered affirmative for certain
sets of shapes and some cases have been proven rigorously \cite{BRKLPU07,KL09}.
However, most recently the first example of a pair of non-isometric
manifolds with identical nodal sequences was found \cite{BRKL08}.\\
In some cases it could be shown that the geometrical information
is stored in the nodal sequence in a way which is very similar to
the way it is stored in spectral functions. For instance, the nodal
count sequence for regular shapes with a separable Laplacian can be
described by a semiclassical trace formula \cite{GNKASM06,nodaltrace-wittenberg,KASM08}.
This trace formula is very similar to the known trace formulas for
spectral functions -- it is a sum over periodic orbits (closed ray-trajectories)
on the manifold where each term contains geometric information about
the orbit. It has been shown that this trace formula can be used to
count the shape of a surface of revolution \cite{KASM08}.\\
 In the irregular case, the existence of a trace formula is an
open question (unpublished numerical experiments by the authors give
some support to the existence of such a formula).

In the present work we continue the thread of research summarized
above and consider the nodal set of the eigenfunctions of one particular
shape: the right angled isosceles triangle (i.e. the triangle with
angles 45-45-90). While this shape is regular with an integrable ray
dynamics, the Laplacian is not separable.

Our main result is an explicit algorithm for the nodal counts. In
contrast to the numerical algorithm used for irregular shapes our
algorithm is exact and does not rely on a finite resolution representation
of the wave function. Though the algorithm is specific to this shape,
the approach may serve as the first step to generalize explicit formulas
for nodal counts beyond the separable case where very few results
are currently available. Furthermore, we conjecture a recursion formula
that allows very efficient evaluation of nodal counts for high eigenvalues
.

In the remainder of the introduction we will introduce the spectrum
and the basis of eigenfunctions for the right angled isosceles triangle.
In section \ref{sec:nodalpattern} we will discuss the nodal structure
of the eigenfunctions and state the nodal count algorithm and the
recursion formula as our main results. In section \ref{sec:applications}
we apply the nodal count algorithm to compute the distribution $P(\xi)$
of scaled nodal counts, and discuss the consistency of the observed
nodal counts with the existence of a trace formula.

\subsection{Eigenvalues and eigenfunctions of the Laplacian for the right angled
isosceles triangle}

\label{sec:intro_triangle}

Let $\mathcal{D}\subset\mathbb{R}^{2}$ be the right angled isosceles
triangle of area $\pi^{2}/2$. For definiteness we choose the triangle
as \[
\mathcal{D}=\{(x,y)\in[0,\pi]^{2}:\ y\le x\}\ .\]
 The eigenvalue problem is stated by \[
-\Delta\varphi(x,y)=-\left(\partial_{x}^{2}+\partial_{y}^{2}\right)\varphi(x,y)=\lambda\varphi(x,y)\quad\text{with \ensuremath{\left.\varphi(x,y)\right|_{\partial\mathcal{D}}=0}}\]
 The spectrum of eigenvalues is given by \[
\lambda_{m,n}=m^{2}+n^{2}\quad\text{for \ensuremath{m,n\in\mathbb{N}^{*}}and \ensuremath{m>n}}\]
 and the corresponding eigenfunctions \begin{equation}
\varphi_{m,n}(x,y)=\sin(mx)\sin(ny)-\sin(nx)\sin(my)\label{eq:eigenfunction_m_n}\end{equation}
 form a complete orthogonal basis.\\
 We denote the nodal count (the number of nodal domains) for $\varphi_{m,n}(x,y)$
by $\nu_{m,n}$. Let us order the spectrum in increasing order, written
as a sequence $\{\lambda_{N}\}_{N=1}^{\infty}$, such that $\lambda_{N}\le\lambda_{N+1}$.
Here $N\equiv N_{m,n}$ is an integer function of the integers $m$
and $n$ (we will continue to suppress the reference to $m$ and $n$)
and we have used a mild abuse of notation by writing $\lambda_{N}=\lambda_{N_{m,n}}=\lambda_{m,n}$.
The spectrum contains degeneracies of a number-theoretic flavour.
For a $g$-fold degenerate eigenvalue $\lambda_{N}=\lambda_{N+1}=\dots=\lambda_{N+g-1}$
we define $N_{m,n}$ by ordering the degenerate values by increasing
$n$. This ordering is arbitrary and has been chosen for definiteness
-- none of our results here would change with a different choice.\\
 In principle one may also be interested in the nodal patterns
of arbitrary eigenfunctions in a degeneracy class. Indeed many physical
applications may imply that the basis functions $\varphi_{m,n}$ cannot
be regarded as typical as soon as one looks at a degenerate eigenvalue.
However, in this paper we will focus exclusively on the nodal counts
of the basis functions $\varphi_{m,n}$ -- for two reasons: \textit{i.}
Understanding the nodal patterns of arbitrary superpositions of the
basis functions $\varphi_{m,n}$ is a much harder problem which does
not follow naturally from understanding just the basis; \textit{ii.}
this choice of basis is natural for any computations.

%%%%%%%%%%%%%%%%%%%%%%%%%%%%%%%%%%%%%%%%%%%%%%%

\section{The nodal pattern \label{sec:nodalpattern}}

The current section describes the main properties of the nodal pattern
of the eigenfunctions $\varphi_{mn}$. These observations are then
used in subsection \ref{sub:Graphic_algorithm} to infer an exact
algorithm for counting nodal domains in the triangle. Eventually,
we propose a very efficient recursion formula for the nodal counts
of the eigenfunctions $\varphi_{mn}$ in subsection \ref{sub:recursion}.

\subsection{A tiling structure of the nodal lines\label{sub:tiling_cases}}

The eigenvalue problem on the triangle possess some symmetry properties
which are revealed in the nodal pattern of the eigenfunctions, $\varphi_{m,n}$.
Specifically, there are eigenfunctions whose nodal sets show a tiling
structure: 
\begin{enumerate}
\item For $m>n$ with $(m+n)\,\bmod\,2=0$, the eigenfunction $\varphi_{m,n}$
is an antisymmetric function with respect to the line $y=\pi-x$.
This line is therefore part of the nodal set of $\varphi_{m,n}$.
The complementary nodal set decomposes into two isometric patterns,
each from either side of the line. Each of these two patterns is similar
to the nodal set pattern of the eigenfunction $\varphi_{m',n'}$ with
$m'=(m+n)/2$ and $n'=(m-n)/2$ (figure \ref{Figure-tiling_cases}(a)).\\

\item For $m>n$ with $\gcd(m,n)=d>1$ the nodal set of the eigenfunction
$\varphi_{m,n}$ consists of $d^{2}$ identical nodal patterns. Each
of these patterns is contained within a sub triangle and they are
tiled together to form the complete pattern. Each such sub pattern
is similar to the nodal set of the eigenfunction $\varphi_{m',n'}$
for $m'=m/d$ and $n'=n/d$ (figure \ref{Figure-tiling_cases}(b)). 
\end{enumerate}
The observations above follow directly from \eqref{eq:eigenfunction_m_n}.

\begin{figure}[!htp]
\hfill{}%
\begin{minipage}[c]{0.3\columnwidth}%
\includegraphics[scale=0.2]{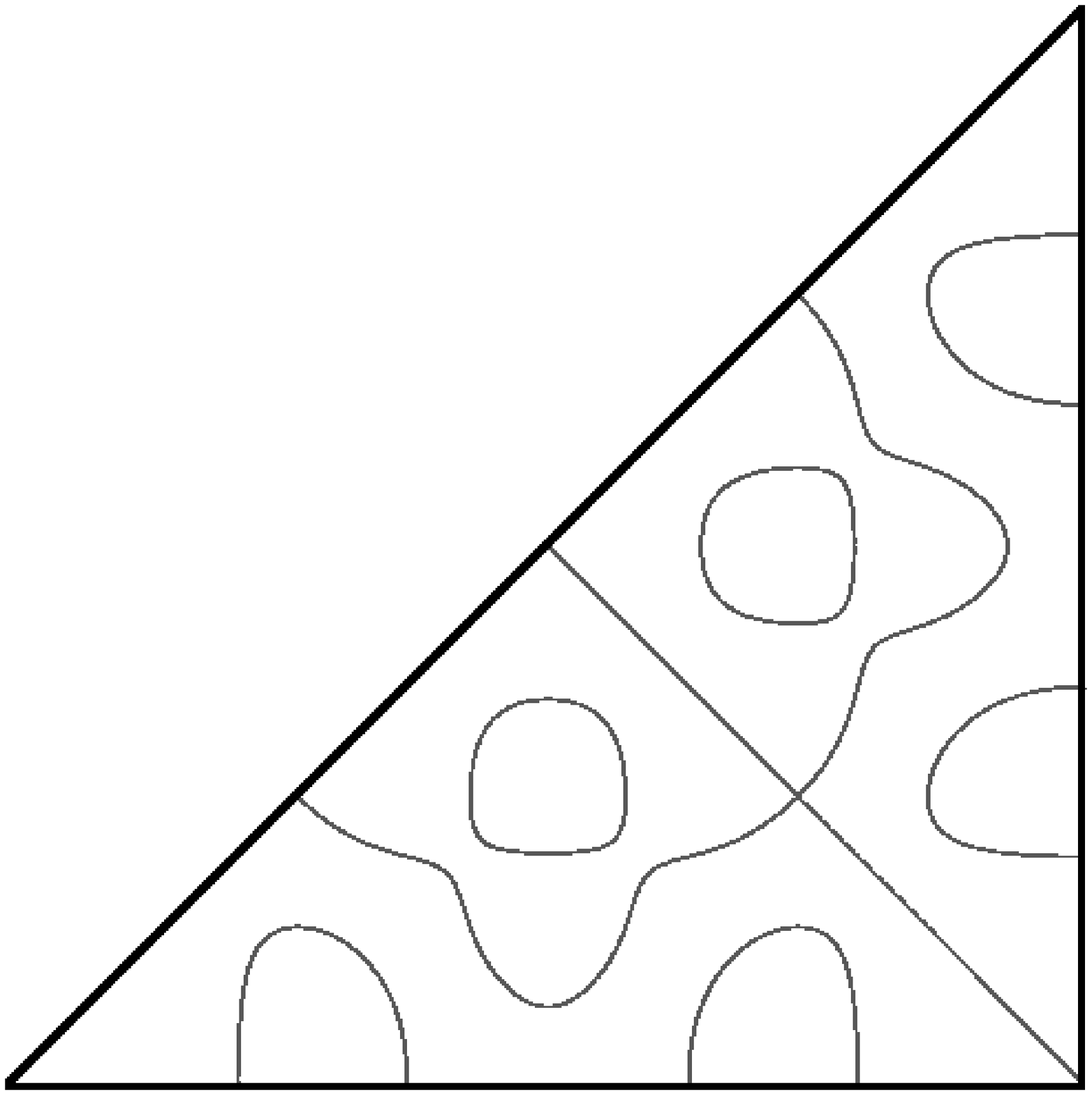}

\medskip{}

\begin{center}
(a) 
\par\end{center}%
\end{minipage}\hfill{}%
\begin{minipage}[c]{0.3\columnwidth}%
\includegraphics[scale=0.2]{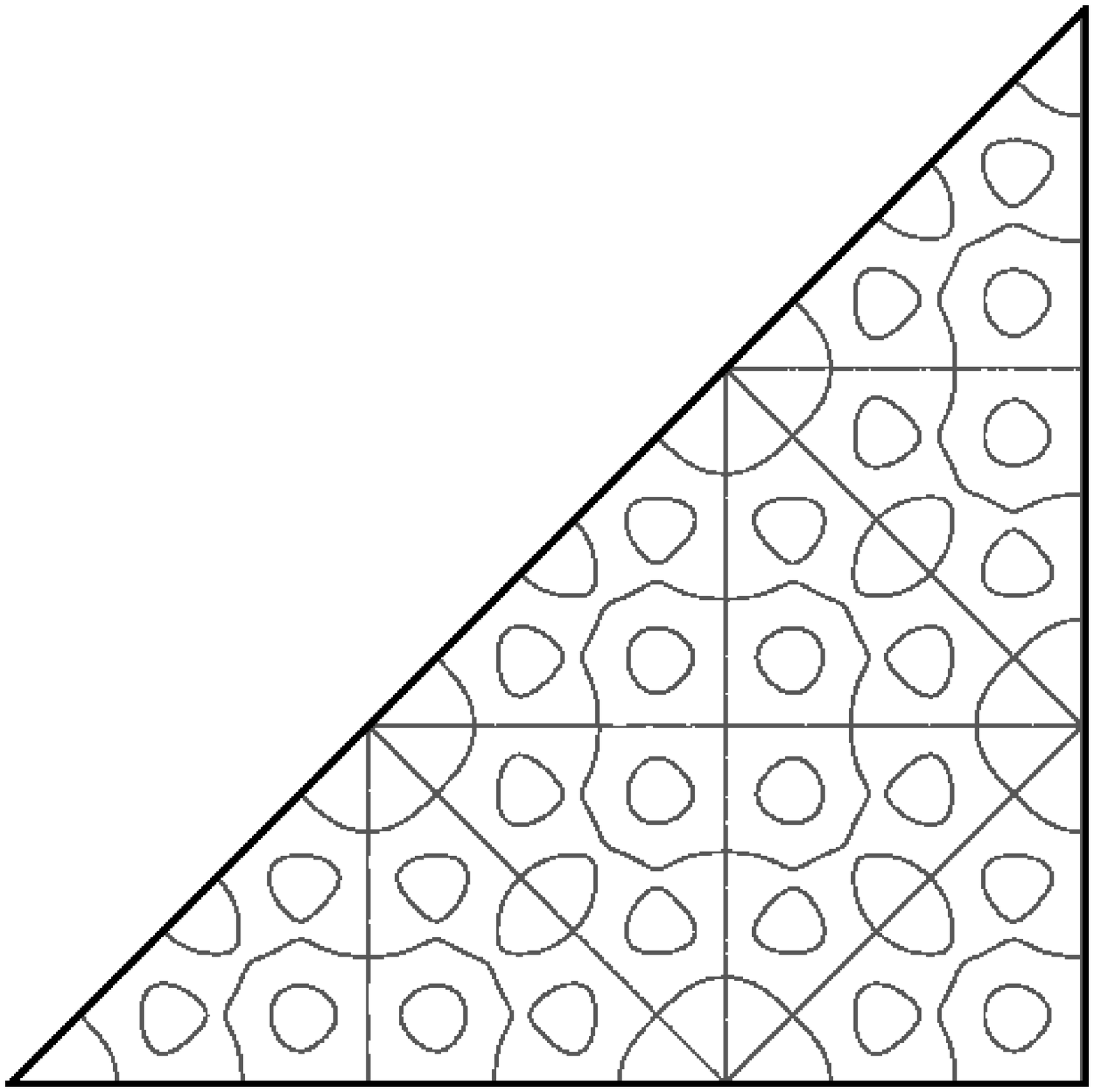}

\medskip{}

\begin{center}
(b) 
\par\end{center}%
\end{minipage}\hfill{}\hfill{}\caption{Two examples for the tiling cases: (a)$\varphi_{9,5}$ and (b)$\varphi_{21,6}$ }

\label{Figure-tiling_cases} %
\end{figure}

\subsection{Characterization of the nodal set}

Let us now characterize the nodal set of the eigenfunction $\varphi_{m,n}$.
We assume that the nodal set of $\varphi_{m,n}$ does not have the
tiling behaviour described in section \ref{sub:tiling_cases}, i.e.
$\gcd(m,n)=(m+n)\bmod\,2=1$. Otherwise, one may reduce the values
of $m,n$, as described above, to a smaller pair $m',n'$, that does
satisfy this condition, and study the nodal set of $\varphi_{m',n'}$
within the reduced triangle. In particular, it is proved in Lemma
\ref{lem:non-tiling-non-crossing} in the appendix that for $m,n$
which satisfy the condition above, the nodal lines of the eigenfunction
$\varphi_{m,n}$ do not cross. This observation is used below to characterize
the nodal set.

We write the eigenfunction $\varphi_{m,n}$ as the difference of the
following two functions \begin{align*}
\varphi_{m,n}^{1}(x,y) & =\sin(mx)\sin(ny),\\
\varphi_{m,n}^{2}(x,y) & =\sin(nx)\sin(my).\end{align*}
 Their nodal sets are correspondingly \begin{align*}
N_{m,n}^{1} & =\left\{ (x,y)\in\mathcal{D}\,\left|\, x\in\frac{\pi}{m}\mathbb{N}\,\vee\, y\in\frac{\pi}{n}\mathbb{N}\right.\right\} ,\\
N_{m,n}^{2} & =\left\{ (x,y)\in\mathcal{D}\,\left|\, x\in\frac{\pi}{n}\mathbb{N}\,\vee\, y\in\frac{\pi}{m}\mathbb{N}\right.\right\} .\end{align*}
 These are regular checkerboard patterns whose nodal domains are open
rectangles and triangles (figure \ref{Fig-checkerboards}(a)).

\begin{figure}
\begin{minipage}[c]{0.3\columnwidth}%
\begin{center}
\includegraphics[scale=0.15]{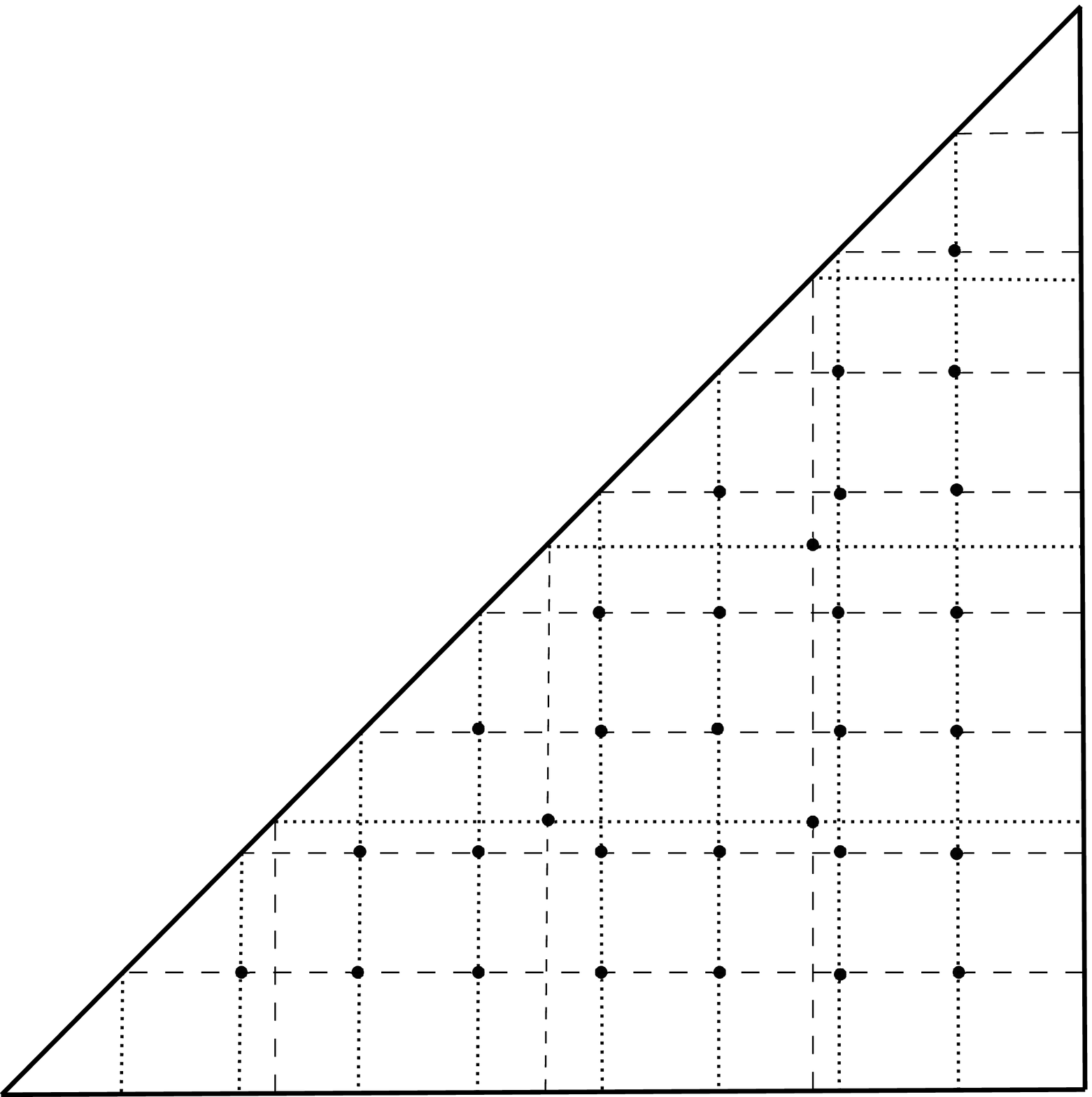} 
\par\end{center}

\medskip{}

\begin{center}
(a) 
\par\end{center}%
\end{minipage}\hspace{1cm}%
\begin{minipage}[c]{0.3\columnwidth}%
\begin{center}
\includegraphics[scale=0.15]{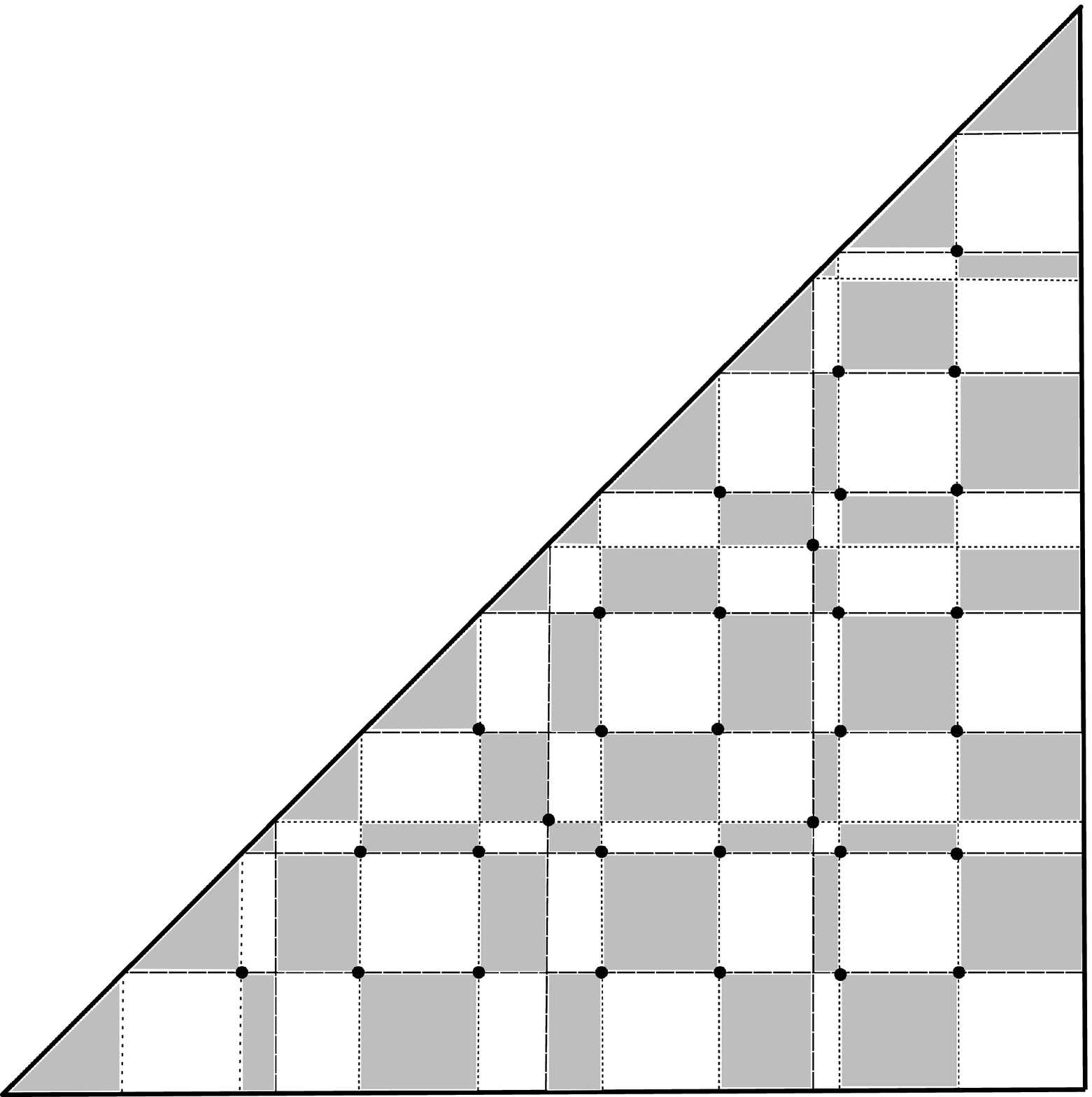} 
\par\end{center}

\medskip{}

\begin{center}
(b) 
\par\end{center}%
\end{minipage}\caption{(a) The nodal sets $N_{9,4}^{1}$ (dotted lines) and $N_{9,4}^{2}$
(dashed lines). (b) The subdomains where $\varphi_{9,4}^{1}$ and
$\varphi_{9,4}^{2}$ have the same sign.}

\label{Fig-checkerboards} %
\end{figure}

The intersection $N_{m,n}^{1}\cap N_{m,n}^{2}$ is the set of points
\[
V_{m,n}=\left\{ \frac{\pi}{m}\left(i,j\right)\,|\,0<j<i<m\right\} \cup\left\{ \frac{\pi}{n}\left(i,j\right)\,|\,0<j<i<n\right\} \]
 (marked with bold points in figure \ref{Fig-checkerboards}). The
eigenfunction $\varphi_{mn}$ vanishes at these points. Hence, nodal
lines pass through them. In the following we analyse the run of the
nodal lines of $\varphi_{mn}$ between the points of the set $V_{m,n}$.
The union $N_{m,n}^{1}\cup N_{m,n}^{2}$ divides $\mathcal{D}$ into
cells shaped as rectangles and triangles of various sizes. These cells
are the connected components of $\mathcal{D}\backslash\left(N_{m,n}^{1}\cup N_{m,n}^{2}\right)$.
The nodal set of $\varphi_{mn}$ is contained within the cells in
which $\varphi_{m,n}^{1}$ and $\varphi_{m,n}^{2}$ have the same
sign. These cells are interlacing in the checkerboard pattern formed
by $N_{m,n}^{1}\cup N_{m,n}^{2}$. We call them the shaded cells and
they appear so in figure \ref{Fig-checkerboards}(b).\\
 The connection between the points in $V_{m,n}$ by nodal lines
can be easily determined by going over the shaded cells and distinguishing
between the following cases: 
\begin{enumerate}
\item A rectangular cell adjacent to two points of $V_{m,n}$. A non self-intersecting
nodal line connects these two points. This is proved in Lemma \ref{lemapp1}.
An example is shown in figure \ref{Fig-connectivity_cases}(a). 
\item A rectangular cell adjacent to four points of $V_{m,n}$. Two nodal
lines connect the two pairs of vertices in either a horizontal or
a vertical non-crossing pattern. One can determine whether the pattern
is horizontal or vertical by comparing the sign of $\varphi_{m,n}$
at the middle point of the rectangle with the sign of $\varphi_{m,n}$
at one of the neighbouring cells. This is proved in Lemma \ref{lemapp1}.
This lemma also proves that a non-tiling eigenfunction, $\varphi_{m,n}$,
cannot vanish at the middle point of the rectangular cell. An example
is shown in figure \ref{Fig-connectivity_cases}(b). 
\item A cell adjacent to a single point of $V_{m,n}$. This happen only
for a cell which is adjacent to the boundary of $\mathcal{D}$. The
$V_{m,n}$ point is then connected to the boundary of $\mathcal{D}$
by a simple non intersecting nodal line. This is proved in Lemma \ref{lemapp2}.
An example is shown in figure \ref{Fig-connectivity_cases}(c). 
\item A triangular cell which do not contain any point of $V_{m,n}$. In
this case there is no nodal line which passes through this triangle.
This is proved in Lemma \ref{lemapp2}.%\eqref{enu:lemapp2.1}.

\end{enumerate}
\begin{figure}
\begin{minipage}[c]{0.3\columnwidth}%
\begin{center}
\includegraphics[scale=0.15]{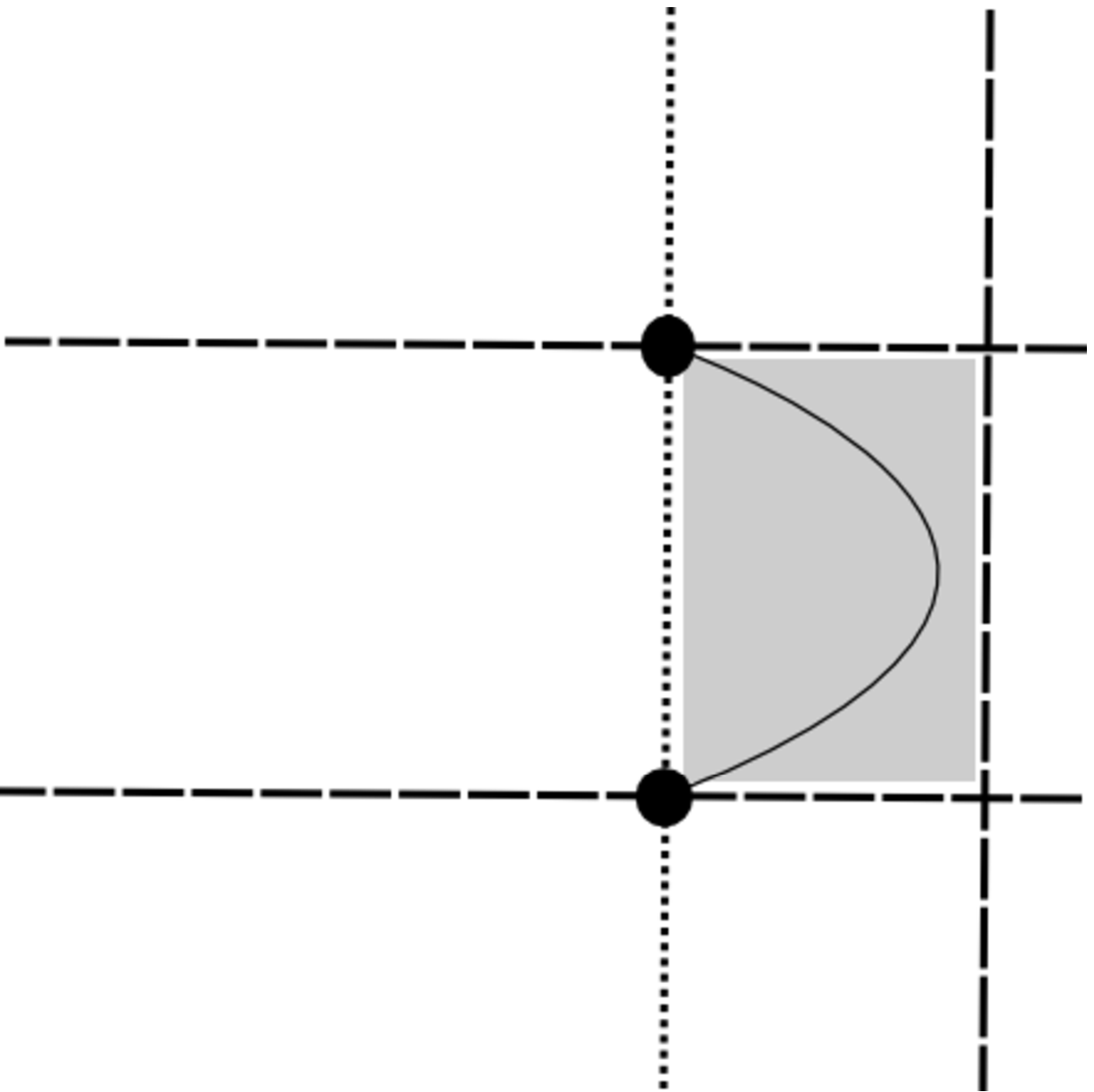} 
\par\end{center}

\medskip{}

\begin{center}
(a) 
\par\end{center}%
\end{minipage}\hfill{}%
\begin{minipage}[c]{0.3\columnwidth}%
\begin{center}
\includegraphics[scale=0.15]{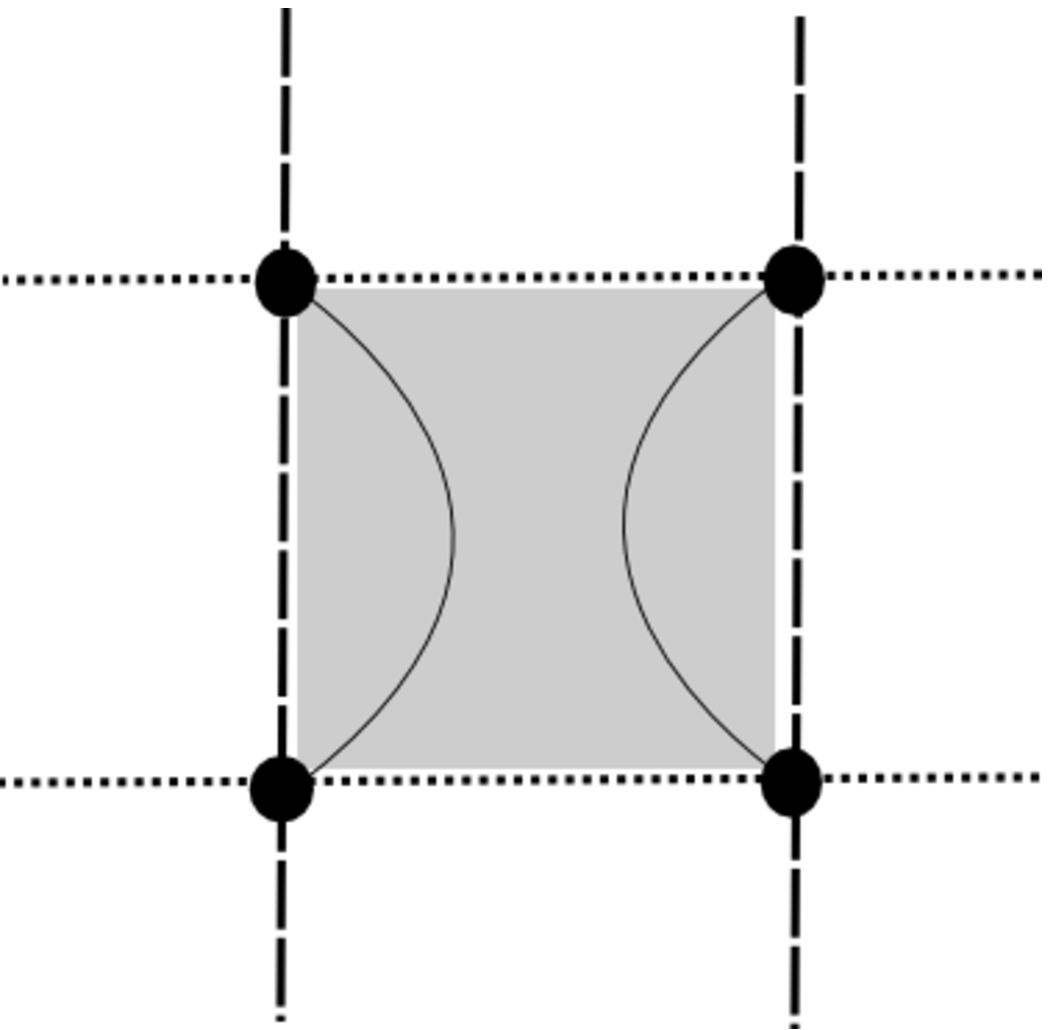} 
\par\end{center}

\medskip{}

\begin{center}
(b) 
\par\end{center}%
\end{minipage}\hfill{}%
\begin{minipage}[c]{0.3\columnwidth}%
\begin{center}
\includegraphics[scale=0.15]{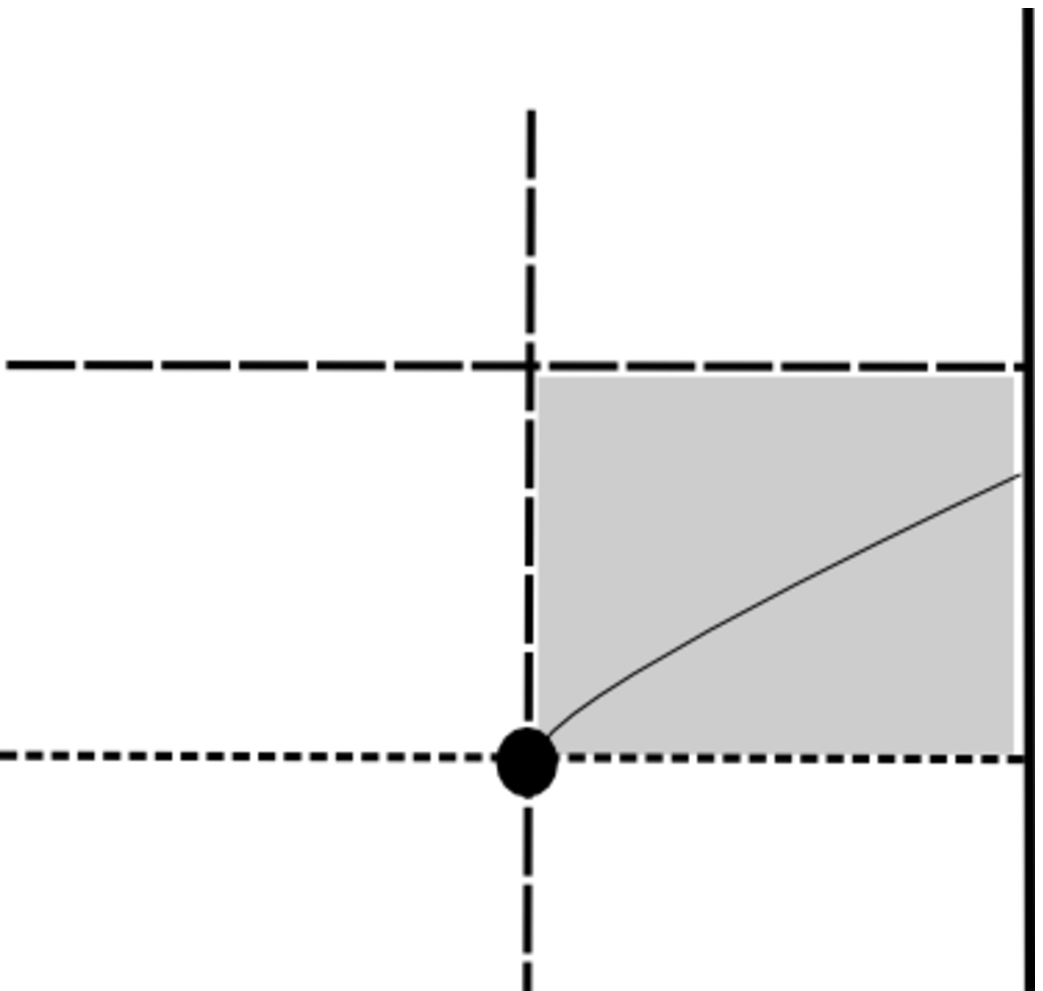} 
\par\end{center}

\medskip{}

\begin{center}
(c) 
\par\end{center}%
\end{minipage}\caption{Different cases of connecting $V_{m,n}$ within the shaded sub-domains
(a) A rectangle with two points from $V_{m,n}$ (b) A rectangle with
four points from $V_{m,n}$ (c) A rectangle with a single point from
$V_{m,n}$}

\label{Fig-connectivity_cases} %
\end{figure}

\subsection{An algorithm for counting the nodal domains\label{sub:Graphic_algorithm}}

We now describe an algorithm for counting $\nu_{m,n}$, the number
of nodal domains of $\varphi_{m,n}$, based on the observations of
the previous section. If the values of $m,n$ correspond to an eigenfunction
with a tiling behaviour we replace them by their reduced values: 
\begin{enumerate}
\item For $m>n$ with $\gcd(m,n)=d>1$, set the new values of $m,n$ to
be $m'=m/d$ and $n'=n/d$. Set the number of tiles to be $d^{2}$.\\

\item For $m>n$ with $(m+n)\,\bmod\,2=0$, set the new values of $m,n$
to be $m'=(m+n)/2$ and $n'=(m-n)/2$. Set the number of tiles to
be 2. 
\end{enumerate}
The number of nodal domains $\nu_{m,n}$ for the original values of
$m,n$ equals to the number of tiles times the number of nodal domains
of the reduced values. We now proceed, assuming the values of $m,n$
were reduced. We create a graph, $G_{m,n}$, whose vertices are $V_{m,n}$
with an additional anchor vertex, $v_{0}$, which stands for the boundary
of the triangle, $\partial\mathcal{D}$. The edges of the graph would
stand for the nodal lines which connect the vertices of $V_{m,n}$.
We go over all shaded cells as described above and for each of them
add either zero, one or two edges to the graph connecting the relevant
vertices. The number of vertices in a cell determines their connectivity,
as described in the previous section%
\footnote{In addition, sampling of $\varphi_{m,n}$ might be required in the
case of a cell adjacent to four vertices.%
}. The cells which contain a nodal line connected to the boundary $\partial\mathcal{D},$
would contribute a single edge to the graph connecting the relevant
vertex of $V_{m,n}$ to the vertex $v_{0}$. Figure \ref{Fig:Pattern_and_Graph}
demonstrate the graph $G_{m,n}$ which corresponds to a certain nodal
set pattern.

\begin{figure}
\begin{minipage}[c]{0.3\columnwidth}%
\begin{center}
\includegraphics[scale=0.2]{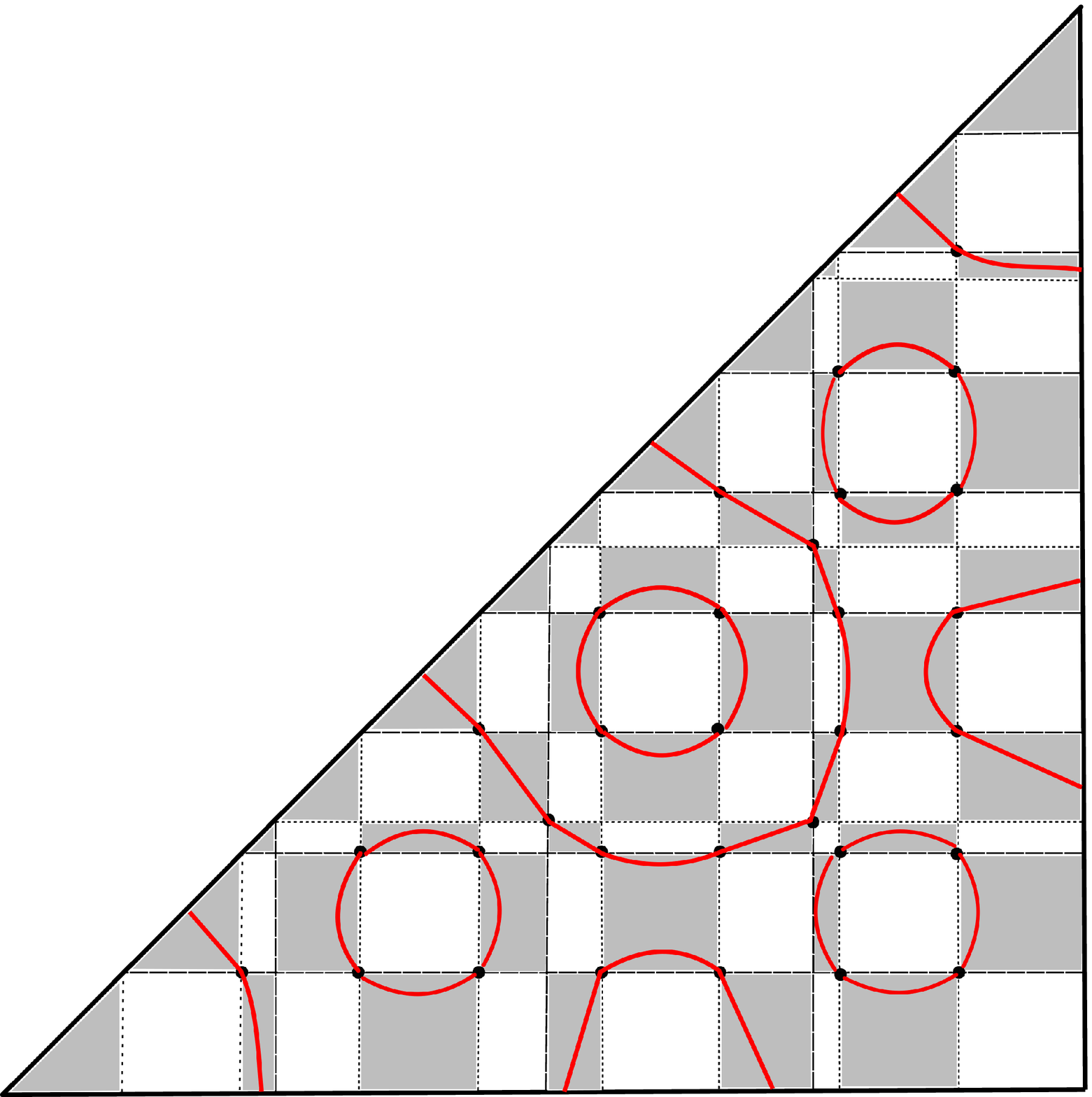} 
\par\end{center}

\medskip{}

\begin{center}
(a) 
\par\end{center}%
\end{minipage}\hspace{1cm}%
\begin{minipage}[c]{0.3\columnwidth}%
\begin{picture}(100,120) \put(0,10){\includegraphics[scale=0.2]{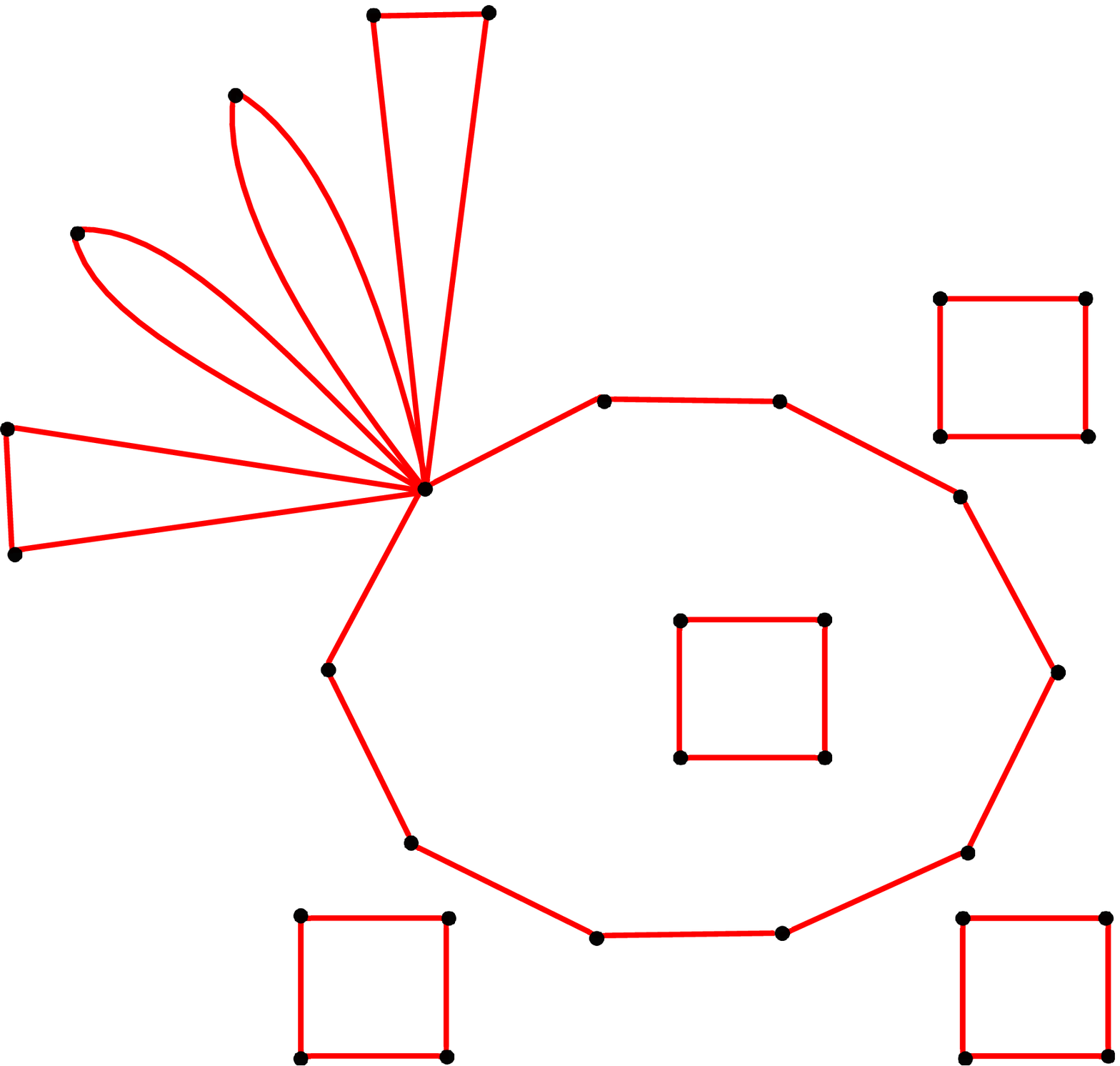}}
\put(40,55){$v_{0}$} \end{picture}

\medskip{}

\begin{center}
(b) 
\par\end{center}%
\end{minipage}

\caption{(a) The nodal set pattern of $\varphi_{9,4}$. (b) The graph $G_{9,4}$
which is produced by the counting algorithm.}

\label{Fig:Pattern_and_Graph} %
\end{figure}

Once the graph $G_{m,n}$ is constructed, the number of nodal domains,
$\nu_{m,n}$, is given by the number of interior faces of the graph
plus one. According to Euler's formula for planar graphs, the number
of faces of $G_{m,n}$ equals $E\left(G_{m,n}\right)-\left|V_{m,n}\right|+c\left(G_{m,n}\right)$,
where $E\left(G_{m,n}\right)$ is the number of edges of $G_{m,n}$
and $c\left(G_{m,n}\right)$ is the number of its connected components.
We therefore get \[
\nu_{m,n}=1+E\left(G_{m,n}\right)-\left|V_{m,n}\right|+c\left(G_{m,n}\right),\]
 which completes the algorithm once $c\left(G_{m,n}\right)$ is calculated
($E\left(G_{m,n}\right)$ and $\left|V_{m,n}\right|$ are known at
this stage).

\subsection{Boundary intersections and nodal loops\label{sub:BI-and-nodal-islands}}

Above we have discussed the nodal count $\nu_{m,n}$. We now introduce
two further quantities which reflect the nodal set structure of the
eigenfunction $\varphi_{m,n}$. The first is the number of intersections
of the nodal set of $\varphi_{m,n}$ with the boundary, $\partial\mathcal{D}$,
which we denote by $\eta_{m,n}$. The second is the number of closed
nodal lines which neither touch the boundary nor intersect themselves
or any other nodal lines. We call those nodal loops, and denote their
number by $I_{m,n}$. In the case where $\varphi_{m,n}$ does not
have a tiling structure, each nodal line is either a loop or a segment
connected to the boundary at two points. Hence, the connection between
the quantities defined above (in the non-tiling case) is given by
the following formula \begin{equation}
\nu_{m,n}=1+\frac{1}{2}\eta_{m,n}+I_{m,n}.\label{eq:nodal_islands_and_intersections}\end{equation}
 As an example, in figure \ref{Fig:Pattern_and_Graph} one can count
$\eta_{9,4}=10$ and $I_{9,4}=4$. The algorithm described in the
preceding section can be used to count $\eta_{m,n}$ and $I_{m,n}$: 
\begin{enumerate}
\item The number of nodal loops, $I_{m,n}$, is given as the number of connected
components of the graph $G_{m,n}$ minus one. 
\item The number of nodal intersections, $\eta_{m,n}$, equals twice the
number of independent cycles of the $G_{m,n}$ component which contains
$v_{0}$. 
\end{enumerate}
It was shown recently (\cite{ARSM09}) that the number of boundary
intersections of the nodal set of $\varphi_{mn}$ in the non-tiling
case is given by \begin{equation}
\eta_{m,n}=m+n-3.\label{f1}\end{equation}

Combining this with \eqref{eq:nodal_islands_and_intersections} indicates
that any formula for the nodal loop count $I_{m,n}$ would yield a
formula for the nodal count $\nu_{m,n}$ and vice versa. %Such
%a formula for the nodal loop count is discussed in the next section.

\subsection{A recursive formula for the nodal loop count\label{sub:recursion}}

In subsection \ref{sub:Graphic_algorithm}, we have described an exact
algorithm that allowed us to compute the nodal loops count. By direct
inspection of tables of evaluated loop counts we have noticed strong
correlations between the counts of different eigenfunctions. An extensive
analysis of such tables allowed us to infer a recursive formula that
we will now describe. Apart from regenerating all data that we looked
at explicitly, we have checked that the empirical formula correctly
predicts all loop counts for the first 100,000 non-tiling eigenfunctions
(this assures agreement of the nodal counts at least up to $N=246062$,
i.e. for all $\varphi_{m,n}$ with $m^{2}+n^{2}\le628325$).\\
 We propose that the loop count $I_{m,n}$ is given by 

\[
I_{m,n}=\tilde{I}\left(n,\,\frac{1}{2}\left(m-n-1\right),\,0\right),\]
where the 3 parameter function $\tilde{I}(n,k,l)$ is defined by the
following recursive formula \begin{equation}
\tilde{I}(n,k,l):=\begin{cases}
0 & n=1\,\text{or}\, k=0\\
\left\lfloor \frac{n}{2k+1}\right\rfloor \left(lk+\left(2l+1\right)k^{2}\right)+\tilde{I}\left(n\,\textrm{mod}\,\left(2k+1\right),\, k,\, l\right) & 2k+1<n\\
\frac{1}{2}\left\lfloor \frac{k}{n}\right\rfloor \left(2l+1\right)\left(n^{2}-n\right)+\tilde{I}\left(n,\, k\,\textrm{mod}\, n,\, l\right) & 2k+1>2n\\
\left(l+\frac{1}{2}\right)\left(2k^{2}+n^{2}-n-2nk+k\right)+\frac{1}{2}k+\tilde{I}\left(2k-n+1,\, n-k-1,\, l+1\right) & n<2k+1<2n.\end{cases}\label{eq:recursive_formula}\end{equation}
 As usual we have assumed that $m,n$ correspond to a non-tiling case
(otherwise, the reduction described above should be made).

\subsubsection*{Remarks}
\begin{enumerate}
\item Note that the description of \eqref{eq:recursive_formula} in terms
of the parameters $\left(n,k\right)=\left(n,\frac{1}{2}\left(m-n-1\right)\right)$
is more compact than a description in terms of the original parameters
$m,n$. 
\item If the initial values of parameters, $n,k$ correspond to a non-tiling
case, i.e. \mbox{$\gcd\left(n+2k+1,n\right)=1$}, then this condition
will hold for all recursive applications of the formula. 
\item One can verify that recursive applications of the formula terminate
at some stage. Namely, that during the recursive applications we arrive
at either $n=1$ or $k=0$. 
\end{enumerate}

\section{Applications to the nodal counting sequence \label{sec:applications}}

\subsection{The nodal count distribution}

Let us now discuss the asymptotic statistics of the number of nodal
domains in terms of the nodal count distribution. %To this end let us sort the eigenfunctions $\phi_{mn}$ 
%in a sequence $\{\phi_N\}_{N=1}^\infty$ (with slight abuse of notation
%one may write $N=N(m,n)$ and $\phi_{N(m,n)}=\phi_{mn}$)
%such that the corresponding sequence of eigenvalues
%$\{E_N\}_N=1^\infty$ forms a non-decreasing sequence,
%$E_{N}\le E_{N+1}$. 
%This sequence is not unique due to
%the degeneracies in the spectrum -- in the following let us assume
%that a definite choice has been made. 
In section \ref{sec:intro_triangle} we have given a definition of
the nodal count sequence $\{\nu_{N}\}_{N=1}^{\infty}$. Let $\nu_{N}$
be the nodal count % number of nodal domains (the \emph{nodal count}) 
of the $N$-th eigenfunction. From Courant's nodal domain theorem
\cite{courant} we know that $\nu_{N}\le N$. While the Courant bound
is only realized by a finite number of eigenfunctions \cite{pleijel}
one may still expect that the nodal count will grow $\nu_{N}\sim N$
with the index $N$. It thus makes sense to introduce the scaled nodal
count \begin{equation}
\xi_{N}=\frac{\nu_{N}}{N}\end{equation}
 and ask about the asymptotic behaviour of $\xi_{N}$ as $N\to\infty$.
The latter has been explored by Blum \textit{et al.} \cite{BLGNSM02}
for general two-dimensional billiards in terms of the nodal count
distribution in the interval $\lambda\le\lambda_{N}\le\lambda(1+g)$
for large $\lambda$. The parameter $g>0$ defines the width of the
interval. The limiting distribution is defined as \begin{equation}
P_{\lambda,g}(\xi)=\frac{1}{N(\lambda,g)}\sum_{N:\lambda_{N}\in[\lambda,(1+g)\lambda]}\delta_{\epsilon}(\xi-\xi_{N})\end{equation}
 where $\delta_{\epsilon}(x)=\epsilon\left(\pi(x^{2}+\epsilon^{2})\right)^{-1}$
is a regularized delta-function (the limit $\epsilon\to0$ will always
be implied in the sequel) and $N(\lambda,g)$ is the number of eigenfunctions
in the interval. The integrated distribution will be denoted by \begin{equation}
I_{\lambda,g}(\xi)=\int_{0}^{\xi}P_{\lambda,g,\epsilon}(\xi')d\xi'.\end{equation}
 As mentioned in the introduction an explicit formula for the limiting
distribution \begin{equation}
P(\xi)=\lim_{\lambda\to\infty}P_{\lambda,g}(\xi)\end{equation}
 can be derived for separable Laplacians using semiclassical methods
\cite{BLGNSM02} while for irregular (chaotic) shapes Bogomolny's
percolation model \cite{BOSC02} predicts that the limiting distribution
is concentrated at a universal value $\overline{\xi}$ which is consistent
with all numerical data available. %$\xi>\xi_{\mathrm{crit}}$. Near the (system-dependent) critical value
%the distribution has a square root divergence $P(\xi) \sim A
%(\xi_{\mathrm{crit}}-\xi)^{-1/2}$ (for $\xi<\xi_{\mathrm{crit}}$).
%\\
%For chaotic billiards, numerical investigations \cite{BLGNSM02,BOSC02} show
%that the limiting distribution
%converges to a delta-function at a universal value.  
%Remarkably, Bogomolny and
%Schmit could evaluate the universal value using a critical
%percolation model for the eigenfunctions
%\cite{Bogomolny}.\\
The right angled isosceles triangle is neither an irregular shape
(in fact the ray dynamics is integrable) nor are its wave functions
separable. The proposed recursion formula (\ref{eq:recursive_formula})
allows us to find the nodal counts for large sequences of eigenfunctions
very efficiently on a computer. We calculated the nodal counts for
all eigenfunctions with $\sqrt{\lambda_{N}}\le13000$ (about 66 million
eigenfunctions) and extracted the nodal count distributions in various
intervals. In the remainder of this section we will set $g=1$ and
discuss the numerical results. %
\begin{figure}
\centering{}\includegraphics[width=0.75\textwidth]{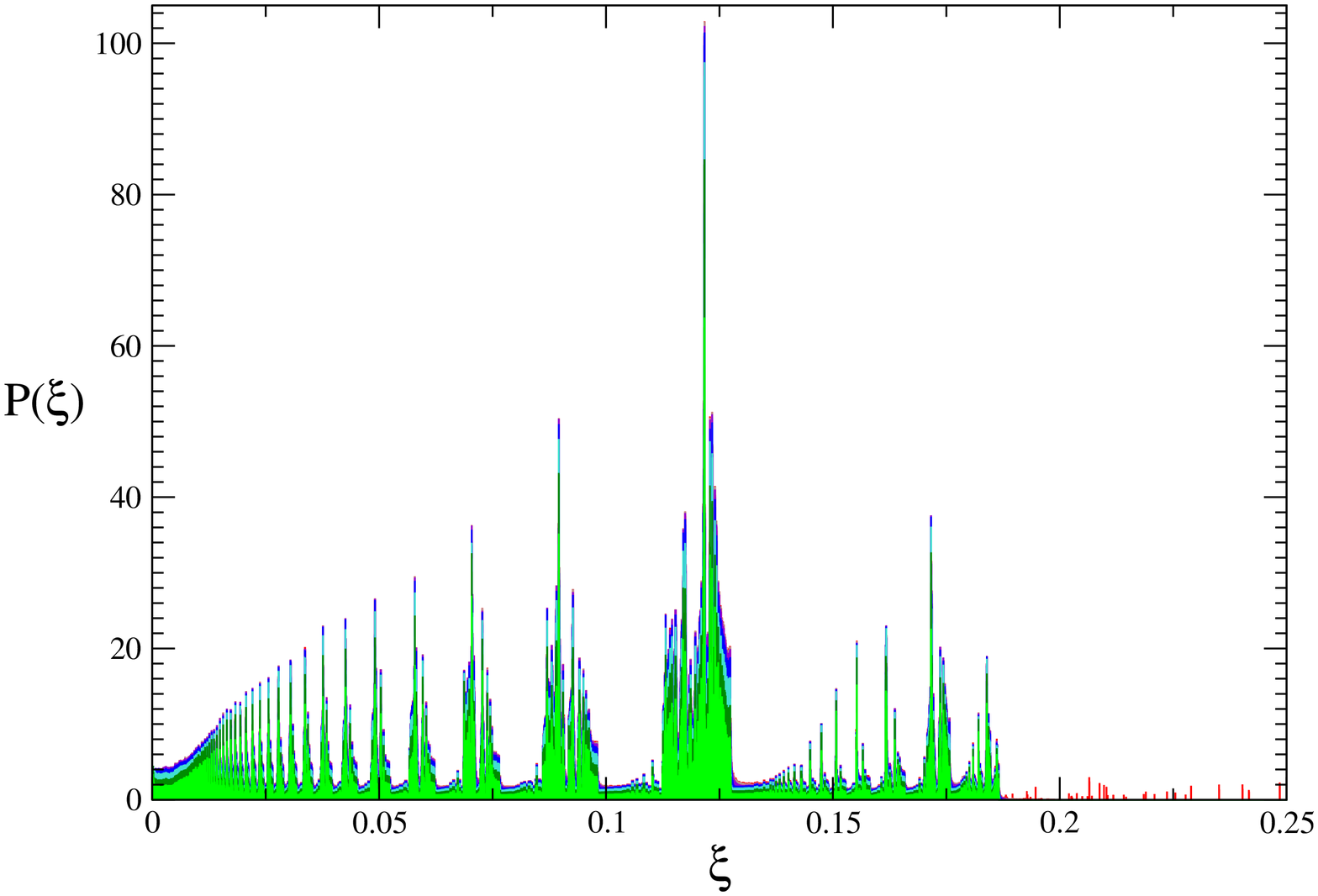}\\
 \includegraphics[width=0.75\textwidth]{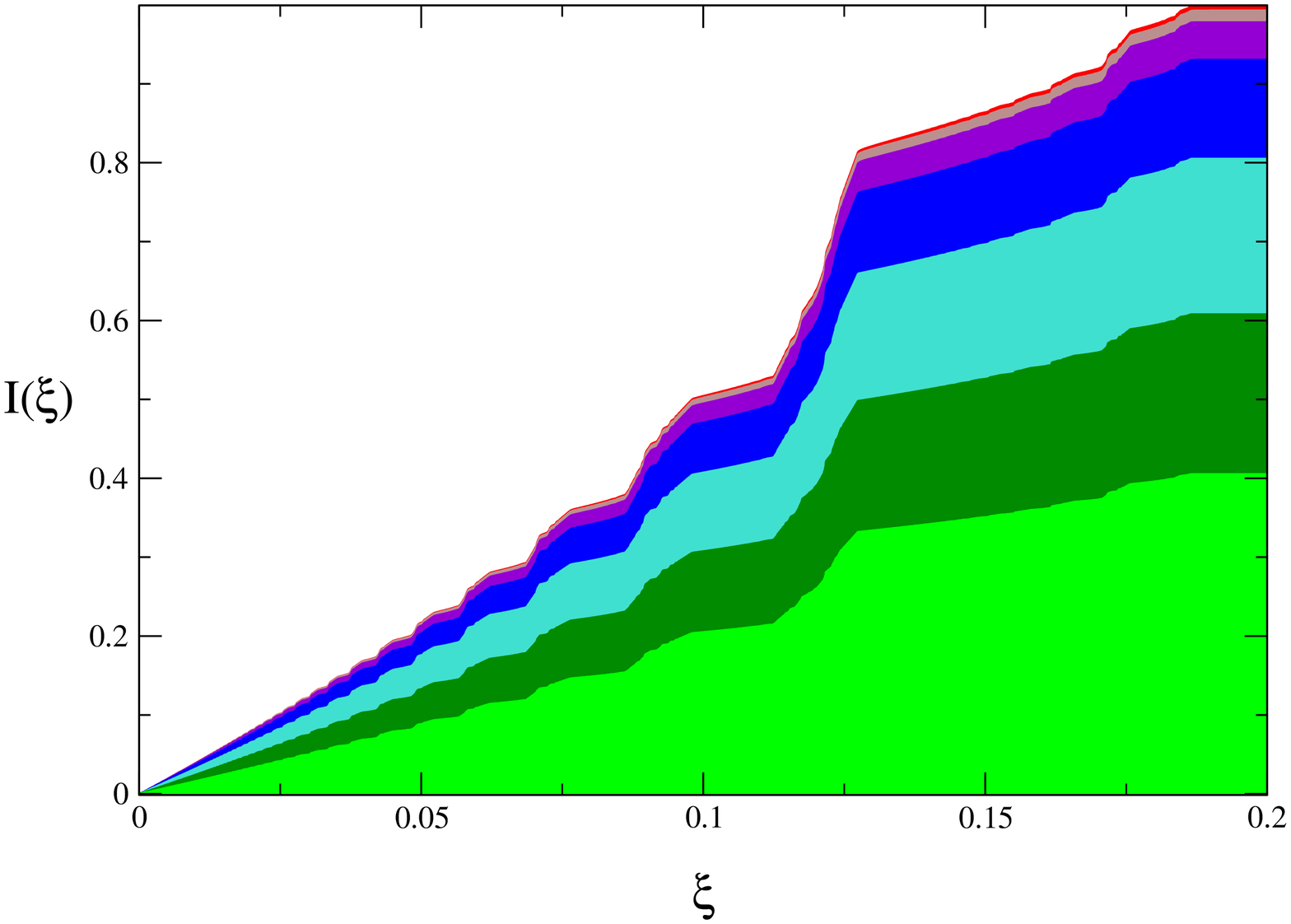}
\caption{ \label{fig:pxi} Upper panel: the nodal count distribution (histogram)
for energies in the interval $9000^{2}\le\lambda_{N}\le2\cdot9000^{2}$.
The colour represent proportion of wave functions with no tiling behaviour
(light green), with exactly 2 tiles (dark green), with 4 to 9 tiles
(turquoise), with 10 to 99 tiles (blue), with 100 to 999 tiles (violet),
with 1000 to 9999 tiles (grey), and with more than 10000 tiles (red).
Lower panel: the corresponding integrated nodal count distribution. }
\end{figure}

Figure \ref{fig:pxi} reveals that the nodal count distribution $P_{\lambda,1}(\xi)$
(with $\lambda=9000^{2}$) for the isosceles triangle contains a lot
of puzzling structure that neither resembles the monotonic behaviour
known from separable billiards nor the single delta-peak known to
describe chaotic billiards. Instead the distribution consists of many
peaks whose strengths and distances form a visible pattern. Each peak
apparently has a further substructure. The same structure appears
if one only includes wave functions without tiling behaviour (or with
a specific number of tiles).\\
 Comparing the nodal count distributions $P_{\lambda,1}(\xi)$
for various values of $\lambda$ gives us some insight into the asymptotic
behaviour of $P_{\lambda,1}(\xi)$. Figure \ref{fig:pxi_comp} shows
how two peaks in the distribution move and change shape as $\lambda$
increases: all peaks move to the left and become sharper. The comparison
reveals that our numerical calculation of $P(\xi)$ has not converged
-- in spite of the extensive number of nodal counts included we cannot
be sure whether a limiting distribution exists. Still it is interesting
to note that, in a certain sense, the asymptotic behaviour contains
some features of chaotic systems. In a chaotic billiard one sees a
single peak which becomes a delta-function as $\lambda\to\infty$.
For the isosceles triangle we see a large number of peaks -- and the
numerics suggests that each one may converge to a delta-function.
Another obvious question suggested by the numerics is whether the
limiting distribution contains fractal features. %
\begin{figure}
\begin{centering}
\includegraphics[width=0.9\textwidth]{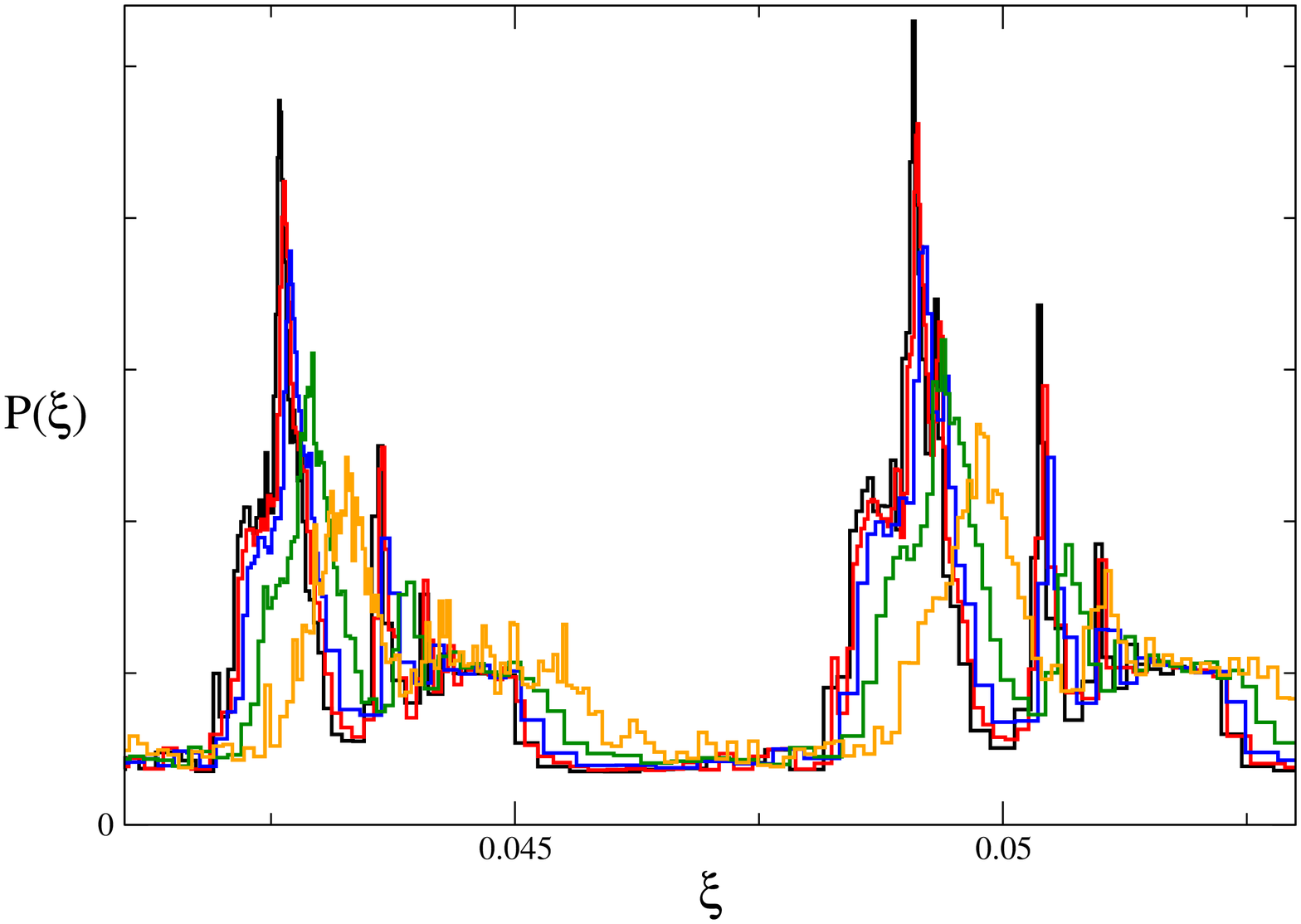} 
\par\end{centering}

\caption{\label{fig:pxi_comp} A detail of the nodal count distribution $P_{\lambda,1}(\xi)$
that shows the limiting behaviour. The five curves are histograms
for $\lambda=1000^{2}$ (orange), $\lambda=2000^{2}$ (green), $\lambda=4000^{2}$
(blue), $\lambda=6000^{2}$ (red), and $\lambda=9000^{2}$ (black). }
\end{figure}

\subsection{The cumulative nodal loop count}

We have already observed in section \ref{sub:BI-and-nodal-islands}
that, at least for the non-tiling case, the nodal count decomposes
to the number of boundary intersections and the nodal loop count \eqref{eq:nodal_islands_and_intersections}.
The number of boundary intersections for the triangle was already
investigated in \cite{ARSM09} and presented as a trace formula. In
this section we thus focus on the nodal loop count. Denoting by $\iota_{n}$
the nodal loop count of the $n$-th eigenfunction, we define two cumulative
continuous counting functions: \begin{align*}
Q(N): & =\sum_{n=1}^{\lfloor N\rfloor}\iota_{n}\\
C(k) & :=\sum_{n=1}^{\infty}\iota_{n}\Theta\left(k-k_{n}\right),\end{align*}
 where $\lfloor N\rfloor$ denotes the largest integer smaller than
$N$, $k_{n}$ is the square root of the $n$-th eigenvalue (multiple
eigenvalues appear more than once in the sequence $\left\{ k_{n}\right\} )$
and $\Theta\left(k\right)$ is the Heaviside theta function. It should
be noted that the functions above can be obtained one from the other
by use of the spectral counting function, $N\left(k\right)=\sum_{n=1}^{\infty}\Theta\left(k-k_{n}\right)$,
or its inversion. Previous works examined similar nodal counting functions
for separable drums \cite{GNKASM06,nodaltrace-wittenberg}. It was
proved that for simple tori and surfaces of revolution the nodal counting
function can be presented as a trace formula. The counting function
was expressed there as a sum of two parts: a smooth (Weyl) term which
reflects the global geometrical parameters of the drum, and an oscillating
term which depends on the lengths of the classical periodic orbits
on the drum. For example, it was shown in \cite{GNKASM06,nodaltrace-wittenberg}
that the smooth part of $\sum_{n=1}^{\lfloor N\rfloor}\nu_{n}$ is
$\mathrm{O}(N^{2})$, and the oscillating term has the form \[
N^{\frac{5}{4}}\sum_{\mathrm{po}}a_{\mathrm{po}}\sin\left(L_{\mathrm{po}}\sqrt{\frac{4\pi}{A}N}+\varphi_{\mathrm{po}}\right),\]
 where the sum is over the periodic orbits, $L_{\mathrm{po}}$ is
the length of the orbit, $a_{\mathrm{po}},\,\varphi_{\mathrm{po}}$
are some coefficients, which depend on the orbit, and $A$ is the
total area of the drum. Results for other separable drums have the
same form.

Having in mind the case of separable drums, we have examined both
$Q\left(N\right)$, and $C\left(k\right)$ numerically and found that
both counting functions have a (numerically) well-defined smooth term
and an oscillatory term. Like in the case of the separable drums,
the smooth term of $C\left(k\right)$ was found to be $\mathrm{O}(k^{4})$
as well. Note that the accumulated boundary intersections count $\sum_{n=1}^{\infty}\eta_{n}\Theta\left(k-k_{n}\right)$
is only ${\rm O}(k^{3})$. Hence, for high energy eigenfunctions,
most of the nodal domains do not touch the boundary. We have extracted
the oscillatory part by numerically interpolating the smooth term
and then subtracting it from $C\left(k\right)$. In order to reveal
whether periodic orbits contribute in a similar way as in the separable
case we evaluated the Fourier transform of the oscillatory term $C_{\mathrm{osc}}\left(k\right)$.
The result is shown in figure \ref{Fig:Fourier_by_k} where the transform
was performed for the interval \[
\left(k_{62439153},\, k_{62831853}\right)\approx\left(\sqrt{9466^{2}+8332^{2}},\,\sqrt{10046^{2}+7688^{2}}\right).\]

\begin{figure}
\includegraphics[clip,scale=0.35]{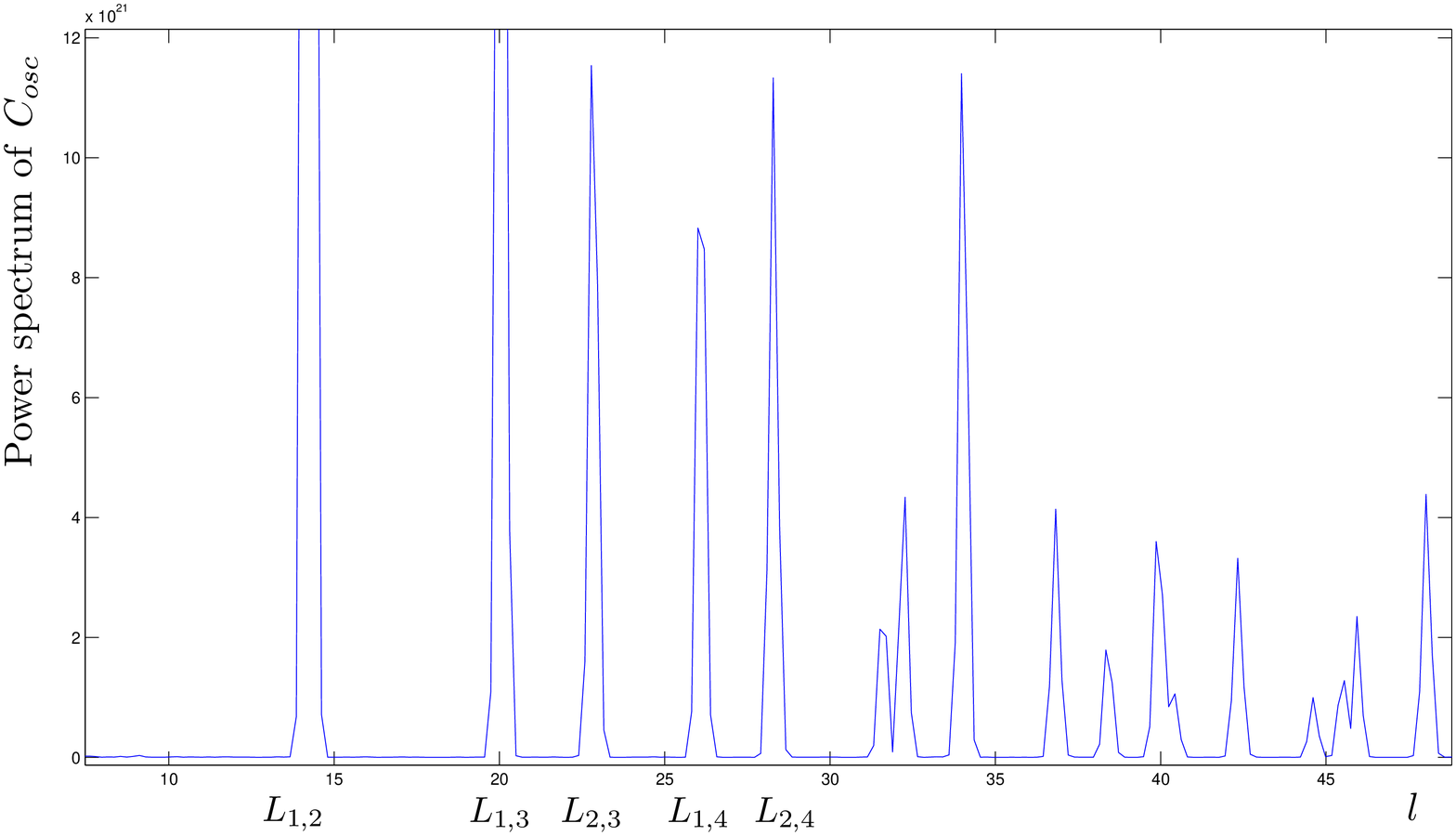}

\caption{The power spectrum of $C_{\mathrm{osc}}\left(k\right)$. The lengths
of some periodic orbits are identified on the $l$ axis.}

\label{Fig:Fourier_by_k} %
\end{figure}

The Fourier transform in figure \ref{Fig:Fourier_by_k} shows clear
peaks at positions which correspond to lengths of periodic orbits
in the triangle. For each value of $\left(p,q\right)\in\mathbb{Z}^{2}\setminus\{(0,0)\}$,
there exists a continuous family of orbits of length $L_{p,q}=2\pi\sqrt{p^{2}+q^{2}}$.
These are orbits that bounce from the bottom cathetus ($y=0$) at
an angle of $\arctan(q/p)$.

The investigation of $Q\left(N\right)$ starts similarly by extracting
its oscillating part. As can be expected from Weyl's formula, the
smooth part is $\mathrm{O}(N^{2})$. However, the Fourier transform
of $Q_{\mathrm{osc}}\left(N\right)$ should be done with respect to
a scaled variable rather than $N$. For this purpose, we use the Weyl
term of the counting function, $N\approx\frac{A}{4\pi}\lambda_{N}$,
where $A=\frac{1}{2}\pi^{2}$ is the area of $\mathcal{D}$ and $\lambda_{N}$
is the $n$-th eigenvalue. The scaled variable used for the Fourier
transform is the square root of the Weyl-estimated eigenvalue, $q\equiv$$\sqrt{\frac{4\pi}{A}N}=\sqrt{\frac{8}{\pi}N}$.
Fourier transforming $Q_{osc}$ with respect to $q$, reveals a linear
combination of delta-like peaks. The positions of these peaks reproduce
the lengths of some of the periodic orbits mentioned above and of
some additional ones: 
\begin{enumerate}
\item Isolated orbits that hit the corner $\left(\pi,0\right)$ at $45^{\circ}$.
The length of such orbits is $L_{n}=\sqrt{2}\pi n$, where $n\in\mathbb{N}$
is the number of repetitions of the basic orbit. 
\item Isolated orbits that go along one of the catheti. Their length is
$\tilde{L}_{n}=2\pi n$, where $n\in\mathbb{N}$ is the number of
repetitions of the basic orbit. 
\end{enumerate}
Figure \ref{Fig:Fourier_by_n} shows the power spectrum of $Q_{\mathrm{osc}}\left(q\right)$,
done when analysing $Q\left(N\right)$ in the interval $N\in\left(38877209,\,39269906\right)$.
\begin{figure}
\includegraphics[clip,scale=0.35]{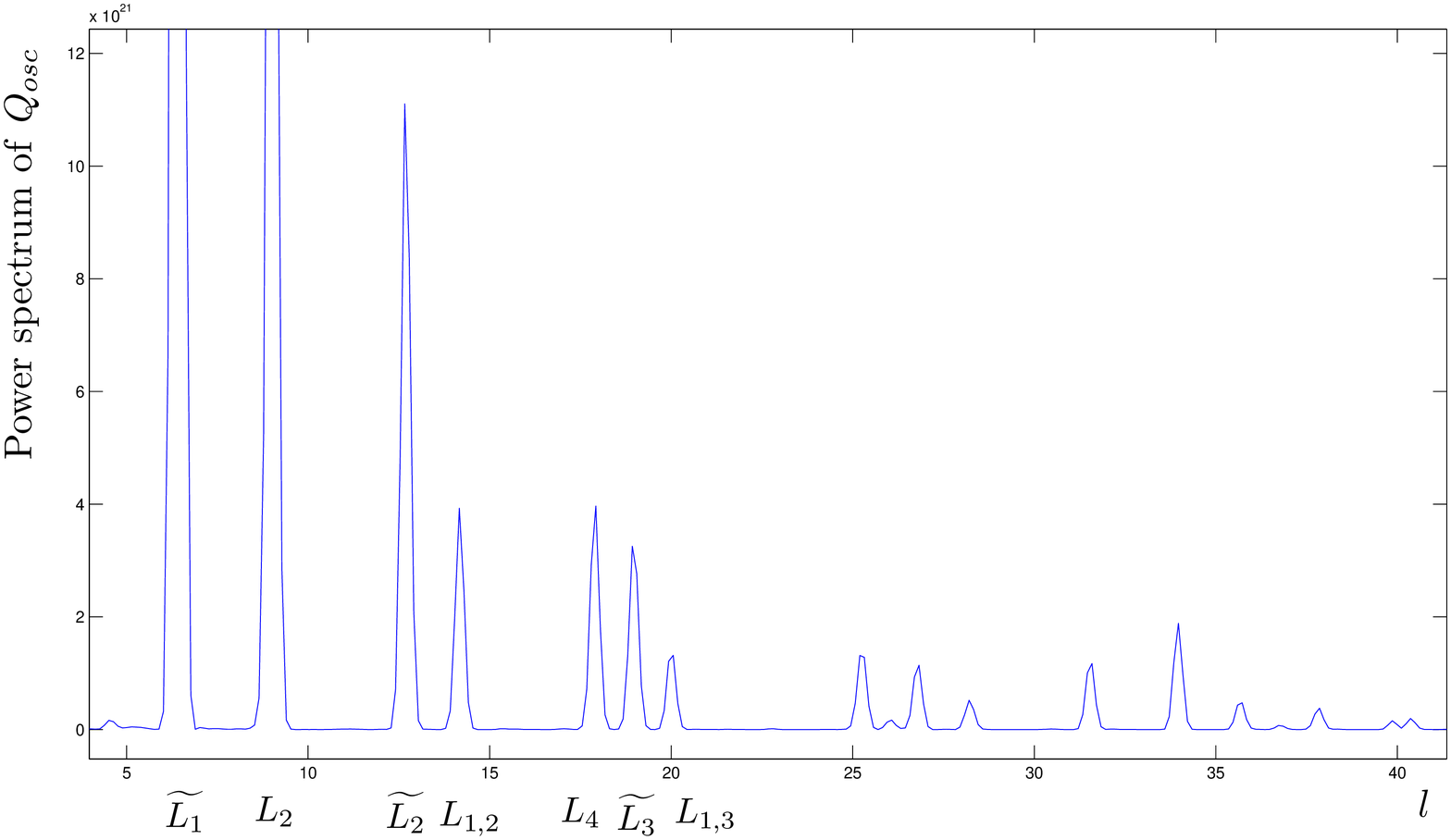}

\caption{The power spectrum of $Q_{\mathrm{osc}}\left(q\right)$. The lengths
of some periodic orbits are identified on the $l$ axis.}

\label{Fig:Fourier_by_n} %
\end{figure}

The above numeric investigation suggests a few observations. The clean
Fourier transforms of both $C_{\mathrm{osc}}$ and $Q_{\mathrm{osc}}$
indicate on the existence of a trace formula for both. The need to
rescale the variable before Fourier transforming $Q_{\mathrm{osc}}$,
suggests that the source of a trace formula for $Q\left(N\right)$
is the trace formula of $C\left(k\right)$ combined with the inversion
of the spectral counting function, $N\left(k\right)$. A similar relation
between the boundary intersections counting functions was revealed
in \cite{ARSM09}. Another observation is that only the continuous
families of periodic orbits appear in the Fourier transform of $C_{\mathrm{osc}}$.
This is fundamentally different from the trace formula of the boundary
intersections (\cite{ARSM09}) and calls for further investigation.
We suggest that the isolated periodic orbits which do appear in the
Fourier transform of $Q_{\mathrm{osc}}$ are caused by the spectral
inversion.

\section{Summary \& Discussion}

This paper investigates the nodal set of the Laplacian eigenfunctions
of the right angled isosceles triangle. The novelty of the work is
the ability to obtain exact results for the nodal count although this
problem is not separable. The algorithm described in section \ref{sub:Graphic_algorithm}
constructs a graph which reflects the topology of the nodal set of
a given eigenfunction. The graph contains complete and exact information
about various properties of the nodal set (such as the number of loops
and the number of nodal domains) which can be calculated straight
forwardly. The standard algorithm used for computing the number of
nodal domains for a known (non-separable) eigenfunction on a drum
is the Hoshen-Kopelman algorithm \cite{hoshen}. It samples the eigenfunction
on a grid of finite resolution. As far as we know all implementations
of the Hoshen-Kopelman algorithm for nodal counting use a fixed grid
and calculate the number of nodal domains as an approximation. In
principle, one may reduce the error by increasing the resolution of
the grid near avoided crossing. However, the application of this algorithm
assumes a priori that there are no nodal intersections. For the special
algorithm we provide here we have proven that it gives the exact result,
even though it samples the eigenfunction more sparsely than the Hoshen-Kopelman.
This also leads to a somewhat faster running time of our algorithm
(for both algorithms the running time is proportional to the energy
$\lambda$ -- however the constant of proportionality is lower for
our algorithm).

Our result may be generalized to other domains where similar algorithms
may apply. Our algorithm is based on the fact that the eigenfunctions
are presented as a linear combination of simple plane waves. It is
therefore tempting to try and generalize it for other drums with similar
property. The equilateral triangle is an immediate candidate (see
\cite{McCartin} and references within).

A further, and quite surprising, result is the recursive formula for
the number of nodal loops. To our knowledge this is the first known
exact formula for the nodal count of a non-separable planar manifold
(for certain eigenfunctions of tori exact formulas have been given
in \cite{BRKL08}). The formula was found by direct inspection of
large tables and has been verified for a large bulk of data computationally.
An obvious challenge is to prove this formula. In particular, the
recursive part of the formula resembles the famous Euclid algorithm
for the greatest common divisor. A further investigation of the mentioned
formula might therefore expose some new number theoretical properties
of the nodal count.

The recursive formula enables us to compute a large amount of data
and to study the statistical properties of the nodal count sequence.
We have studied this sequence using functions which are commonly used
in research of nodal domains: the nodal count distribution and the
cumulative nodal count. The nodal count distribution showed intriguing
structure that resembles neither the behaviour known from separable
billiards nor the one of chaotic billiards. If at all, there is some
similarity to the chaotic case, where the limiting distribution is
a single delta function, whereas in our case it contains a large number
of peaks.\\
 In our analysis of the cumulative nodal count we found numerical
evidence for the existence of a trace formula similar to the one recently
derived for separable drums \cite{GNKASM06,nodaltrace-wittenberg}.
An open question is therefore to prove the existence of a trace formula
in our case, shedding more light on the question `Can one count the
shape of a drum?' %Another challenge is to use it within the framework of inverse problems
%to show that `one can count the shape of a drum'.

\section*{Acknowledgements}

We are grateful to Uzy Smilansky for presenting the problem to us
and for the continuous support of the work. We thank Amit Godel for
fruitful discussions. AA and RB thank the University of Nottingham
for hospitality. DF and SG thank the Weizmann Institute for hospitality.
The work was supported by ISF grant 169/09. RB is supported by EPSRC,
grant number EP/H028803/1. DF was supported by the Minerva Foundation.

\newpage{} 

\appendix
%dummy comment inserted by tex2lyx to ensure that this paragraph is not empty

\section{Proofs of three lemmas}

\begin{lemma}\label{lem:non-tiling-non-crossing} Let \begin{equation}
\varphi_{m,n}(x,y)=\sin(mx)\sin(ny)-\sin(nx)\sin(my),\label{eq:eigenfunction_appendix}\end{equation}
 be an eigenfunction of the Laplacian on $\mathcal{D}$, where $m,\, n$
obey $\gcd(m,n)=(m+n)\bmod2=1$. Then there are no crossings of the
nodal set of $\varphi_{m,n}$ in the interior of $\mathcal{D}$.

\end{lemma}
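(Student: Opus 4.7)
The plan is to assume there is a crossing in $\mathrm{int}(\mathcal{D})$ and derive a contradiction. A crossing corresponds to a point $(x_{0},y_{0})$ at which $\varphi_{m,n}$ and both of its first partial derivatives vanish, giving the system
\begin{align*}
\sin(mx_{0})\sin(ny_{0}) &= \sin(nx_{0})\sin(my_{0}),\\
m\cos(mx_{0})\sin(ny_{0}) &= n\cos(nx_{0})\sin(my_{0}),\\
n\sin(mx_{0})\cos(ny_{0}) &= m\sin(nx_{0})\cos(my_{0}).
\end{align*}
(At such a common zero the Hessian of $\varphi_{m,n}$ is traceless since $\Delta\varphi_{m,n}=-\lambda\varphi_{m,n}$, so the local zero set indeed splits into two transverse branches, confirming that these are the points to rule out.)

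First I would dispose of the degenerate sub-cases where one of $\sin(mx_{0}),\sin(nx_{0}),\sin(my_{0}),\sin(ny_{0})$ vanishes. For example, if $\sin(mx_{0})=0$ then $x_{0}=k\pi/m$ with $1\le k<m$; coprimality forces $\sin(nx_{0})\neq 0$, the first equation then forces $\sin(my_{0})=0$, and the second equation (using $\cos(mx_{0})\neq 0$) forces $\sin(ny_{0})=0$, which again violates $\gcd(m,n)=1$. The three symmetric sub-cases are analogous.

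In the non-degenerate sub-case I would divide the second equation by the first and the third by the first to extract
\[ m\cot(mx_{0})=n\cot(nx_{0})=:\mu,\qquad m\cot(my_{0})=n\cot(ny_{0})=:\mu'. \]
Combining with $\sin^{2}+\cos^{2}=1$ yields $\sin^{2}(mx_{0})=m^{2}/(m^{2}+\mu^{2})$, $\sin^{2}(nx_{0})=n^{2}/(n^{2}+\mu^{2})$, and analogously at $y_{0}$ with $\mu'$. Squaring the first equation and substituting collapses it to $(m^{2}-n^{2})(\mu^{2}-\mu'^{2})=0$, so $\mu=\pm\mu'$ (since $m>n$).

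The two cases $\mu=\mu'$ and $\mu=-\mu'$ are then eliminated by the two hypotheses in turn. If $\mu=\mu'$, then $\cot(mx_{0})=\cot(my_{0})$ and $\cot(nx_{0})=\cot(ny_{0})$ give $x_{0}-y_{0}\in\frac{\pi}{m}\mathbb{Z}\cap\frac{\pi}{n}\mathbb{Z}=\pi\mathbb{Z}$ by $\gcd(m,n)=1$, which is incompatible with $0<y_{0}<x_{0}<\pi$. If $\mu=-\mu'$, the analogous argument using $\cot(\pi-my_{0})=-\cot(my_{0})$ forces $x_{0}+y_{0}=\pi$; but then a direct expansion yields
\[ \varphi_{m,n}(x_{0},\pi-x_{0})=\bigl[(-1)^{m}-(-1)^{n}\bigr]\sin(mx_{0})\sin(nx_{0}), \]
which is nonzero precisely because $m+n$ is odd and the product of sines is nonzero in the non-degenerate case, contradicting the first equation. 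The main obstacle will be keeping the sign ambiguities under control when passing to squares; the auxiliary parameters $\mu,\mu'$ are introduced precisely to absorb them, after which the two hypotheses play cleanly complementary roles in closing the two remaining cases.
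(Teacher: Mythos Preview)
Your argument is correct and follows the same overall strategy as the paper: write down the three equations $\varphi_{m,n}=\partial_x\varphi_{m,n}=\partial_y\varphi_{m,n}=0$, manipulate them algebraically, and close the resulting cases using the coprimality and parity hypotheses. The paper divides to obtain $\sin(nx)/\sin(mx)=\sin(ny)/\sin(my)$ together with $\tan(nx)/\tan(mx)=\tan(ny)/\tan(my)=n/m$, derives the analogous cosine ratio, and after squaring reduces to either $\sin^2(my)=\sin^2(ny)$ (immediately contradicting $m\neq n$ via the tangent relation) or $\sin^2(mx)=\sin^2(my)$, the latter being dispatched by a case check it does not spell out.

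Your organization is tighter in two respects. First, you isolate the degenerate sub-cases (some $\sin$ factor vanishing) at the outset, which the paper's ratio manipulations implicitly assume away. Second, the parameters $\mu=m\cot(mx_0)=n\cot(nx_0)$ and $\mu'=m\cot(my_0)=n\cot(ny_0)$ automatically couple the $m$- and $n$-frequency sign choices, so the final split is just $\mu=\mu'$ versus $\mu=-\mu'$; in the paper's framework this coupling is hidden inside the unspecified ``examine several possibilities'' step. The payoff is that the two hypotheses act transparently and separately: $\gcd(m,n)=1$ forces $x_0-y_0\in\pi\mathbb{Z}$ in the first case, while $(m+n)$ odd makes $\varphi_{m,n}(x_0,\pi-x_0)=[(-1)^m-(-1)^n]\sin(mx_0)\sin(nx_0)\neq 0$ in the second. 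Same proof in spirit, but your bookkeeping is cleaner and more complete.
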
 \begin{proof} The necessary conditions for a crossing
to happen at a point $\left(x,y\right)$ are \begin{align*}
\varphi_{m,n}\left(x,y\right) & =0,\\
\nabla\varphi_{m,n}\left(x,y\right) & =0.\end{align*}

After some algebraic manipulations the equations above give \begin{align}
\frac{\sin\left(nx\right)}{\sin\left(mx\right)} & =\frac{\sin\left(ny\right)}{\sin\left(my\right)},\label{eq:condition1}\\
\frac{\tan\left(nx\right)}{\tan\left(mx\right)} & =\frac{n}{m},\label{eq:condition2}\\
\frac{\tan\left(ny\right)}{\tan\left(my\right)} & =\frac{n}{m}.\label{eq:condition3}\end{align}
 Combining \eqref{eq:condition1} , \eqref{eq:condition2} and \eqref{eq:condition3}
gives \[
\frac{\cos\left(nx\right)}{\cos\left(mx\right)}=\frac{\cos\left(ny\right)}{\cos\left(my\right)}.\]
 Squaring this and using \eqref{eq:condition1} allows to conclude
that one of the following holds \begin{align*}
\sin^{2}\left(my\right) & =\sin^{2}\left(ny\right)\,\,\textrm{or}\\
\sin^{2}\left(my\right) & =\sin^{2}\left(mx\right).\end{align*}
 Assuming $\sin^{2}\left(my\right)=\sin^{2}\left(ny\right)$ immediately
leads to $\frac{n}{m}=\pm1$, which contradicts the assumptions on
the values of $m$ and $n$. Assuming $\sin^{2}\left(my\right)=\sin^{2}\left(mx\right)$
leads to $\sin^{2}\left(ny\right)=\sin^{2}\left(nx\right)$. We are
now required to examine several possibilities for the relations of
the expressions $mx,my,nx,ny$. Such an examination shows that each
possibility will lead to a contradiction with the requirements $x,y\in\left(0,\pi\right)$
and the conditions $\gcd(m,n)=(m+n)\bmod2=1$. \end{proof}

From now on we consider only $m$, $n$ obeying the non-tiling conditions.
Recall the following definitions. Let $\varphi_{mn}$ be an eigenfunction
of the form \eqref{eq:eigenfunction_appendix} and \begin{align*}
\varphi_{mn}^{1}(x) & =\sin(mx)\sin(ny)\\
\varphi_{mn}^{2}(x) & =\sin(nx)\sin(my).\end{align*}
 Furthermore \begin{align*}
N_{m,n}^{1} & =\left\{ (x,y)\in\mathcal{D}\,\big|\, x\in\frac{\pi}{m}\mathbb{N}\vee y\in\frac{\pi}{n}\mathbb{N}\right\} ,\\
N_{m,n}^{2} & =\left\{ (x,y)\in\mathcal{D}\,\big|\, x\in\frac{\pi}{n}\mathbb{N}\vee y\in\frac{\pi}{m}\mathbb{N}\right\} \mbox{ and }\\
V_{m,n} & =\left\{ \frac{\pi}{m}(i,j)|0<j<i<m\right\} \cup\left\{ \frac{\pi}{n}(i,j)|0<j<i<n\right\} .\end{align*}
 By $\mathcal{N}(\varphi_{mn})$ we denote the nodal set of $\varphi_{mn}$.
Let $\mathcal{I}_{c}\subset\mathcal{D}\setminus(N_{m,n}^{1}\cup N_{m,n}^{2})$
be a rectangular shaped cell whose boundary is contained in $N_{m,n}^{1}\cup N_{m,n}^{2}$
and contains $c$ points from $V_{m,n}$, with $p_{0}$ being its
centre point. We also assume that $\forall(x,y)\in\mathcal{I}_{c}\,:\,\sign\varphi_{mn}^{1}(x,y)=\sign\varphi_{mn}^{2}(x,y)$.
\begin{lemma}\label{lemapp1} \global\long\def\theenumi{\roman{enumi}}
 \global\long\def\labelenumi{\theenumi}
 \hspace*{0.5cm}\\[-0.3cm] 
\begin{enumerate}
\item $\mathcal{N}(\varphi_{mn})\cap\mathcal{I}_{2}$ consists of a non-self-intersecting
line connecting the nodal corners of $\mathcal{I}_{2}$. 
\item $\pm\varphi_{mn}(p_{0})>0$ and $\mathcal{N}(\varphi_{mn})\cap\mathcal{I}_{4}$
consists of two separated lines each connecting adjacent nodal corners
along edges with $\mp\varphi_{mn}>0$. 
\end{enumerate}
\end{lemma}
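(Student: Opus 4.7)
The plan is to assemble three preparatory facts about any shaded rectangular cell $\mathcal{I}_c$ (assumed without loss of generality to have $\varphi_{m,n}^1,\varphi_{m,n}^2>0$) and then conclude the two parts by counting arc endpoints and analysing signs. (a) \emph{Boundary behaviour.} Each edge of $\mathcal{I}_c$ lies in exactly one of $N_{m,n}^1, N_{m,n}^2$: on an $N_{m,n}^1$-edge $\varphi_{m,n}^1\equiv 0$ so $\varphi_{m,n}=-\varphi_{m,n}^2$, and the product form of $\varphi_{m,n}^2$ makes it clear that $\varphi_{m,n}|_{\partial\mathcal{I}_c}$ vanishes exactly at the $V_{m,n}$ corners, being negative on $N_{m,n}^1$-edge segments and positive on $N_{m,n}^2$-edge segments. (b) \emph{Local linearisation.} At a corner $p=(\pi i/m,\pi j/m)\in V_{m,n}$, writing $(u,v)=(x,y)-p$, Taylor expansion gives
\[
\varphi_{m,n}(p+(u,v)) = m\bigl[(-1)^i u\sin(n\pi j/m)-(-1)^j v\sin(n\pi i/m)\bigr]+O(\|(u,v)\|^2);
\]
since $\gcd(m,n)=1$ and $0<i,j<m$, both coefficients are nonzero, so the implicit function theorem gives a single smooth nodal arc through $p$ transverse to both incident edges (the other type of $V_{m,n}$ point is symmetric). (c) \emph{Interior smoothness.} By Lemma \ref{lem:non-tiling-non-crossing} $\varphi_{m,n}$ has no critical zeros in the interior of $\mathcal{D}$, so $\mathcal{N}(\varphi_{m,n})\cap\overline{\mathcal{I}_c}$ is a smooth $1$-manifold whose boundary is exactly the $V_{m,n}$ corners---a disjoint union of Jordan arcs (between pairs of corners) and Jordan loops in the interior.

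\emph{Part (i).} For $c=2$, the two corner endpoints yield a single Jordan arc, and the sign data of (a) pin down which pair of opposite-sign edge segments it separates. Interior Jordan loops are excluded by the factorisation $\varphi_{m,n}=\varphi_{m,n}^2\bigl(h(x)/h(y)-1\bigr)$ with $h(t)=\sin(mt)/\sin(nt)$: the nodal set in a shaded cell is the level set $\{h(x)=h(y)\}$, and on a rectangle where $h$ is monotonic on each coordinate interval this level set is a single monotone curve. Monotonicity of $h$ on each cell interval can be checked from $h'/h = m\cot(mt) - n\cot(nt)$ between consecutive grid points.

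\emph{Part (ii).} A 4-corner rectangle must have both vertical edges in $N_{m,n}^1$ and both horizontal edges in $N_{m,n}^2$ (or the swap), so $p_0=(\pi(i+\tfrac12)/m,\,\pi(j+\tfrac12)/m)$. Using $\sin(\pi(i+\tfrac12))=(-1)^i$,
\[
\varphi_{m,n}(p_0)=(-1)^i\sin\!\bigl(\tfrac{n\pi(2j+1)}{2m}\bigr)-(-1)^j\sin\!\bigl(\tfrac{n\pi(2i+1)}{2m}\bigr);
\]
sum-to-product with $\beta=n\pi/(2m)$, $s=i+j+1$, $d=i-j$ reduces the vanishing of this expression to one of $\sin(d\beta)=0$, $\cos(d\beta)=0$, $\sin(s\beta)=0$, $\cos(s\beta)=0$. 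Each alternative forces $m\mid d$ or $m\mid s$ and, combined with $|d|<m$, $s\le 2m-2$, and the parity in that branch, contradicts either $\gcd(m,n)=1$ or $(m+n)\bmod 2=1$. Hence $\varphi_{m,n}(p_0)\neq 0$. The four endpoints then yield exactly two Jordan arcs (loops ruled out as in Part (i)), and the boundary signs from (a) together with $\sign\varphi_{m,n}(p_0)$ force these arcs to pair adjacent (not diagonal) corners along the two edges where $\mp\varphi_{m,n}>0$.

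The main technical obstacle is excluding interior nodal loops in a shaded cell: a direct Faber--Krahn/eigenvalue-monotonicity bound is insufficient, because shaded cells can be large enough to host a ground-state-sized nodal domain of eigenvalue $\lambda_{m,n}$, so the argument has to exploit the separable factorisation through $h$ and a careful monotonicity analysis of $h$ on every cell interval. Carrying that analysis through rigorously is expected to require number-theoretic input analogous to the non-tiling conditions used in Lemma \ref{lem:non-tiling-non-crossing}.
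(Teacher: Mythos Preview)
Your framework is sound and largely parallels the paper's implicit setup: use Lemma~\ref{lem:non-tiling-non-crossing} to get a smooth $1$-manifold, identify boundary zeros with the $V_{m,n}$ corners, then match arc endpoints. Where you diverge is in ruling out interior loops, and there you leave a genuine gap that you yourself acknowledge.

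The paper does not pass through the ratio $h(t)=\sin(mt)/\sin(nt)$. Instead it runs a direct \emph{scan-line} argument: on a vertical segment $\ell$ through the cell, one of $\varphi_{m,n}^1,\varphi_{m,n}^2$ rises strictly monotonically from $0$ (its only $y$-dependence is a single sine factor vanishing at one endpoint of the $y$-interval) while the other falls strictly monotonically to $0$; their difference therefore has exactly one zero on each $\ell$, so the nodal set in the cell is the graph of a function of $x$. This single observation simultaneously excludes loops, self-intersections, and extra components---no separate topological bookkeeping or appeal to the implicit function theorem at the corners is needed. The paper treats one representative cell geometry in detail and declares the remaining configurations analogous. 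The relevant point for you is that this route relies only on monotonicity of \emph{individual sine factors} on short subintervals, which is elementary, rather than monotonicity of their ratio $h$, which (as you correctly anticipate) is delicate and not obviously true on every cell interval; the sign of $h'/h=m\cot(mt)-n\cot(nt)$ can certainly change between consecutive grid points.

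Your handling of part~(ii) is actually more thorough than the paper's: you give an explicit formula for $\varphi_{m,n}(p_0)$ and a sum-to-product case analysis showing it cannot vanish under the non-tiling hypotheses, whereas the paper simply asserts the existence of a line of constant sign through the centre and reapplies the scan-line argument to each half of the rectangle. So the deficit is squarely in your loop-exclusion step for part~(i): either complete the $h$-monotonicity analysis you sketch, or replace it with the paper's scan-line argument, which reaches the same conclusion with much less machinery.
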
 \begin{proof} Nodal sets on 2-dimensional manifolds
are submanifolds except for a closed set of lower dimension, where
nodal lines intersect. For an eigenfunction $\varphi_{mn}$ this singular
set is characterized by $\varphi_{mn}^{-1}(0)\cap(\nabla\varphi_{mn})^{-1}(0)$.
The boundary of a rectangle $\mathcal{I}_{2}$ with two points of
$V_{mn}$ intersects the nodal set only at those two points. By elementary
arguments using the monotonicity of the $\sin$ function, the existence
of nodal lines that do not intersect with the boundary of this rectangle
can be ruled out. The nodal set has to connect the nodal corners,
since nodal lines do not end. We present this argument in detail for
one specific case and leave the other cases to the reader. We consider
the situation of figure \ref{figure-appendix-1}. 

\begin{figure}
\begin{picture}(100,150) \put(-40,-130){\includegraphics[width=7cm]{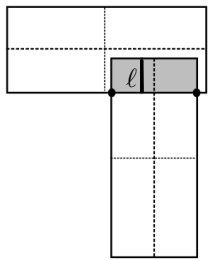}}
\put(47,69){$\ell$} \end{picture} \caption{Example of a superposition of the nodal pattern of $\varphi^{1}$
and $\varphi^{2}$}

\label{figure-appendix-1} %
\end{figure}
Let the rectangle $\mathcal{I}_{2}$ be in this case such that only
one symmetry axis of the two nodal domains of $\varphi_{mn}^{1}$
and $\varphi_{mn}^{2}$ enters $\mathcal{I}_{2}$. The symmetry axes
are the dotted lines and $\mathcal{I}_{2}$ is shaded. The vertices
on the lower corners belong to the nodal set and the boundary of $\mathcal{I}_{2}$
between those two points belongs to a nodal domain of $\varphi_{mn}$
with positive sign (assume this for now - for negative sign it would
be the same argument). Then the upper boundary of $\mathcal{I}_{2}$
belongs to a nodal domain of $\varphi_{mn}$ with a negative sign.
We study now the behaviour of $\varphi$ on a vertical line $\ell$
between the upper and lower boundary - like the one displayed in the
figure. Note first that the horizontal rectangle in the figure is
a nodal domain of $\varphi_{mn}^{2}$ with positive sign while the
vertical rectangle is a nodal domain of $\varphi_{mn}^{1}$ with positive
sign. On the lower end of $\ell$, $\varphi_{mn}^{2}$ starts equal
to zero and grows strictly monotonic on $\ell$ until it reaches the
boundary at a positive value. $\varphi_{mn}^{1}$ starts with a positive
value and falls strictly monotonic ending at zero. $\varphi_{mn}$
being the difference of $\varphi_{mn}^{1}$ and $\varphi_{mn}^{2}$
equals zero exactly once on $\ell$. This is true for any $\ell$.
The nodal set therefore intersects every $\ell$ exactly once and
therefore has no intersections nor further isolated nodal domains.\\
 In the case of rectangles with 4 points of $V_{mn}$ there is
a line of constant sign of $\varphi_{mn}$ running through the centre,
which cannot be intersected by a nodal line. It can be concluded as
above that the nodal corners are joined by nodal lines within the
two remaining components of this rectangle. \end{proof}

Let $\mathcal{T}\subset\mathcal{D}\setminus(N_{m,n}^{1}\cup N_{m,n}^{2})$
be a triangular shaped cell next to the boundary with $\sign\varphi_{mn}^{1}(x,y)=\sign\varphi_{mn}^{2}(x,y)$
in $\mathcal{T}$. Let $\mathcal{I}^{b}\subset\mathcal{D}\setminus(N_{m,n}^{1}\cup N_{m,n}^{2})$
be a rectangular shaped cell next to the boundary with $\sign\varphi_{mn}^{1}(x,y)=\sign\varphi_{mn}^{2}(x,y)$
in $\mathcal{I}^{b}$.

\begin{lemma}\label{lemapp2} \global\long\def\theenumi{\roman{enumi}}
 \global\long\def\labelenumi{\theenumi}
 \hspace*{0.5cm}\\[-0.3cm] 
\begin{enumerate}
\item A triangular cell $\mathcal{T}$ contains a nodal line iff $\overline{\mathcal{T}}$
contains a point of $V_{m,n}$. $\mathcal{N}(\varphi_{mn})\cap\mathcal{T}$
is a nodal line connecting this point to the boundary. 
\item If $\overline{\mathcal{I}^{b}}$ contains one point of $V_{m,n}$
then $\mathcal{N}(\varphi_{mn})\cap\mathcal{I}^{b}$ is a nodal line
connecting this point to the boundary. 
\item If $\overline{\mathcal{I}^{b}}$ contains two points of $V_{m,n}$
then $\mathcal{N}(\varphi_{mn})\cap\mathcal{I}^{b}$ is a nodal line
connecting those two points. 
\end{enumerate}
\end{lemma}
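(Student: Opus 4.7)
The plan is to extend the monotonicity argument from Lemma \ref{lemapp1}, using the fact that $\varphi_{m,n}$ vanishes identically on $\partial\mathcal{D}$ by the Dirichlet condition. Thus the boundary edge of any cell adjoining $\partial\mathcal{D}$ functions as an additional ``edge of zeros'' which the nodal set inside the cell must connect to the interior $V_{m,n}$ corners in a controlled way.

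First I would classify the boundary data of each cell. On any edge lying in $N_{m,n}^{1}$ we have $\varphi_{m,n}^{1}=0$ and hence $\varphi_{m,n}=-\varphi_{m,n}^{2}$, which has a fixed nonzero sign on that edge since $\sign\varphi_{m,n}^{1}=\sign\varphi_{m,n}^{2}$ throughout the cell; symmetrically on an edge in $N_{m,n}^{2}$; and on an edge in $\partial\mathcal{D}$, $\varphi_{m,n}\equiv 0$. At a corner, $\varphi_{m,n}$ vanishes precisely when that corner is in $V_{m,n}\cup\partial\mathcal{D}$; at any other corner (an intersection of one $N_{m,n}^{1}$ line with one $N_{m,n}^{2}$ line lying away from $V_{m,n}$ and from the boundary), only one of $\varphi_{m,n}^{1},\varphi_{m,n}^{2}$ vanishes, so $\varphi_{m,n}\neq 0$ there. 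This already dictates which corners nodal lines in the cell may touch, which is exactly what the trichotomy in the lemma encodes.

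Next I would carry out the transversal argument, sweeping a family of parallel line segments $\ell_{t}$ across the cell: horizontal segments from the vertical non-boundary edge of $\mathcal{T}$ to the hypotenuse in part (i); and segments perpendicular to the boundary edge of $\mathcal{I}^{b}$ in parts (ii) and (iii). On each $\ell_{t}$ the factors $\sin(mx),\sin(nx),\sin(my),\sin(ny)$ are strictly monotonic because the cell lies strictly between consecutive zeros of every factor involved. Together with the endpoint sign data collected above, this forces $\varphi_{m,n}\big|_{\ell_{t}}$ to have at most one zero, and exactly one whenever the endpoints have opposite signs. Sweeping $t$ traces a single simple arc: in case (i) it runs from the $V_{m,n}$ apex to $\partial\mathcal{D}$ (or no arc exists when $\overline{\mathcal{T}}\cap V_{m,n}=\emptyset$, since then $\varphi_{m,n}$ keeps a constant sign on every transversal); in case (ii) it runs from the $V_{m,n}$ corner to $\partial\mathcal{D}$; and in case (iii) it runs between the two $V_{m,n}$ corners exactly as in Lemma \ref{lemapp1}.

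The main obstacle is the systematic case enumeration: each of the three parts splits into several geometric sub-cases depending on whether the non-boundary edges of the cell come from $N_{m,n}^{1}$ or $N_{m,n}^{2}$, on the relative position of the $V_{m,n}$ corners (adjacent versus opposite in part (iii)), and on which side of $\partial\mathcal{D}$ the boundary edge lies. For each sub-case one must verify that the sign pattern of $\varphi_{m,n}^{1}-\varphi_{m,n}^{2}$ at the endpoints of the chosen family of transversals is compatible with a unique zero. The non-tiling hypothesis $\gcd(m,n)=(m+n)\bmod 2=1$ is used here to guarantee that no $V_{m,n}$ point accidentally lands on $\partial\mathcal{D}$ and that the monotonicity on every cell is strict, so that the transversal argument really yields a single nodal arc rather than a segment of zeros.
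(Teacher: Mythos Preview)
For parts (ii) and (iii) your plan is essentially what the paper does: both defer to the monotonicity argument of Lemma~\ref{lemapp1}, with the Dirichlet edge supplying an extra row of zeros on $\partial\mathcal{D}$.

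For part (i) the paper takes a different and shorter route. Instead of running transversals that terminate on the hypotenuse, it exploits the antisymmetry $\varphi_{m,n}(y,x)=-\varphi_{m,n}(x,y)$ and reflects $\mathcal{T}$ across the line $y=x$. The union of $\mathcal{T}$ with its mirror image is a full rectangular cell of the grid $N_{m,n}^{1}\cup N_{m,n}^{2}$ on the square $[0,\pi]^{2}$; its two diagonal corners are automatically zeros of $\varphi_{m,n}$, and the two off-diagonal corners are zeros precisely when $\overline{\mathcal{T}}$ meets $V_{m,n}$. Lemma~\ref{lemapp1} then applies verbatim: the hypotenuse itself is one of the nodal arcs it produces, and in the four-corner case the remaining arc, restricted to $\mathcal{T}$, is the line from the $V_{m,n}$-vertex to $\partial\mathcal{D}$. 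Your horizontal sweep, by contrast, has one endpoint of every $\ell_{t}$ sitting on the hypotenuse where $\varphi_{m,n}=0$; the slogan ``strictly monotone with endpoints of opposite sign $\Rightarrow$ exactly one zero'' therefore does not apply as stated, and separating the boundary zero from a genuine interior zero requires an extra ingredient (for instance the sign of the normal derivative along $y=x$, or sweeping with segments parallel to the hypotenuse). Your approach can be completed along those lines, but the reflection trick buys all of this for free and reduces the case enumeration to what was already done in Lemma~\ref{lemapp1}.
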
 \begin{proof}In order to understand the run of the nodal
set, the nodal pattern is continued beyond the hypotenuse by defining
the eigenfunction on the whole square to be the continuation of the
eigenfunction on the triangle. This rectangle can now be treated just
as in Lemma \ref{lemapp1} with two points of $V_{m,n}$ on the left
lower and right upper corner. The resulting nodal line coincides with
the hypotenuse. In case there is a point of $V_{m,n}$ on the right
lower corner, there is also one in the left upper corner by symmetry
and the case with four nodal corners from Lemma \ref{lemapp1} applies,
and shows the existence of a nodal line connecting the right lower
corner with the boundary. The other points are proven similarly to
the proof of lemma \ref{lemapp1} by monotonicity of $\varphi^{1}$
and $\varphi^{2}$. \end{proof}

\end{document}